\newtheorem{problem}{Problem}
\newtheorem{defn}{Definition}
\newtheorem{lem}{Lemma}
\newtheorem{thm}{Theorem}
\newenvironment{sproof}{%
	\proof}{\endproof}
\newcommand{\arxivD}{{Arxiv}\xspace}
\newcommand{\facebookD}{{Facebook}\xspace}
\newcommand{\youtubeD}{{Youtube}\xspace}
\newcommand{\randomD}{{Random}\xspace}
\newcommand{\flickrD}{{Flickr}\xspace}
\newcommand{\berkstanD}{{BerkStan}\xspace}
\newcommand{\friendsterD}{{Friendster}\xspace}
\newcommand{\livejournalD}{{LiveJournal}\xspace}
\newcommand{\googleD}{{Google}\xspace}
\newcommand{\smallsection}[1]{{\vspace{0.1in} \noindent {\bf{\underline{\smash{#1}}}}}}
\newcommand{\smallsectionnoline}[1]{{\vspace{0.1in} \noindent {\bf{#1}}}}
\newcommand{\figsummary}[1]{\textbf{#1}}
\newcommand{\SG}{\mathcal{G}}
\newcommand{\SV}{\mathcal{V}}
\newcommand{\SE}{\mathcal{E}}
\newcommand{\BE}{\mathbb{E}}
\newcommand{\BP}{\mathbb{P}}
\newcommand{\SGT}{\SG^{(t)}}
\newcommand{\SVT}{\SV^{(t)}}
\newcommand{\SET}{\SE^{(t)}}
\newcommand{\zt}{z^{(t)}}
\newcommand{\cmark}{\ding{51}}%
\newcommand{\bit}{\begin{itemize}[leftmargin=15pt]}
\newcommand{\eit}{\end{itemize}}
\newcommand{\ben}{\begin{enumerate}[leftmargin=15pt]}
\newcommand{\een}{\end{enumerate}}
\newcommand{\mapreduce}{\textsc{MapReduce}\xspace}
\newcommand{\wrs}{\textsc{WRS}\xspace}
\newcommand{\trifly}{\textsc{Tri-Fly}\xspace}
\newcommand{\cocos}{\textsc{CoCoS}\xspace}
\newcommand{\cocosopt}{\textsc{CoCoS}\textsubscript{OPT}\xspace}
\newcommand{\cocossimple}{\textsc{CoCoS}\textsubscript{SIMPLE}\xspace}
\newcommand{\thinkd}{\textsc{ThinkD}\xspace}
\DeclareMathOperator*{\argmin}{arg\,min}
\newcommand{\ints}{\{1,2,...\}}
\newcommand{\uv}{\{u,v\}}
\newcommand{\vw}{\{v,w\}}
\newcommand{\wu}{\{w,u\}}
\newcommand{\pair}{\uv}
\newcommand{\et}{e^{(t)}}
\newcommand{\eone}{e^{(1)}}
\newcommand{\etwo}{e^{(2)}}
\newcommand{\gstream}{(\eone, \etwo, ...)}
\newcommand{\triestimp}{\textsc{Triest\textsubscript{impr}}\xspace}
\newcommand{\mascot}{\textsc{Mascot}\xspace}
\newcommand{\rept}{\textsc{REPT}\xspace}
\newcommand{\rmse}{RMSE\xspace}
\newcommand{\globalerrorL}{Global Error\xspace}
\newcommand{\globalvarL}{Global Variance\xspace}
\newcommand{\localerrorL}{Local Error\xspace}
\newcommand{\rankcorrelationL}{Rank Correlation\xspace}
\newcommand{\budget}{b}
\newcommand{\triple}{\{u,v,w\}}
\newcommand{\tuv}{t_{uv}}
\newcommand{\tvw}{t_{vw}}
\newcommand{\twu}{t_{wu}}
\newcommand{\pt}{p^{(t)}}
\newcommand{\STT}{\mathcal{T}^{(t)}}
\newcommand{\STTu}{\STT[u]}
\newcommand{\cbar}{\bar{c}}
\newcommand{\cbart}{\bar{c}^{(t)}}
\newcommand{\cu}{c[u]}
\newcommand{\ctu}{c^{(t)}[u]}
\newcommand{\globalnum}{|\STT|}
\newcommand{\localnum}{|\STTu|}
\newcommand{\es}{e^{(s)}}
\newcommand{\tripletwo}{\{u,v,x\}}
\newcommand{\qt}{q^{(t)}}
\newcommand{\istar}{i^{*}}
\newcommand{\li}{l_{i}}
\newcommand{\lj}{l_{j}}
\newcommand{\lucky}{\textsc{Lucky}\xspace}
\newcommand{\unlucky}{\textsc{Unlucky}\xspace}
\newcommand{\assign}{\textsc{Assigned}\xspace}
\newcommand{\noassign}{\textsc{Unassigned}\xspace}
\newcommand{\workernum}{k}
\newcommand{\workerset}{\{1,...,\workernum\}}
\newcommand{\SGI}{\mathcal{G}_{i}}
\newcommand{\SVI}{\mathcal{V}_{i}}
\newcommand{\SEI}{\mathcal{E}_{i}}
\newcommand{\SESI}{\mathcal{E}^{(s)}_{i}}
\newcommand{\SNI}{\mathcal{N}_{i}}
\newcommand{\lti}{l_{i}^{(t)}}
\newcommand{\ltfu}{l_{f(u)}^{(t)}}
\newcommand{\ltip}{l_{i}^{(t+1)}}
\newcommand{\lsi}{l_{i}^{(s)}}
\newcommand{\pti}{p_{i}^{(t)}}
\newcommand{\qti}{q_{i}^{(t)}}
\newcommand{\piuvw}{p_{i}[uvw]}
\newcommand{\diuvw}{d_{i}[uvw]}
\newcommand{\fuvw}{f(uvw)}
\newcommand{\pfuvwuvw}{p_{\fuvw}[uvw]}
\newcommand{\dfuvwuvw}{d_{\fuvw}[uvw]}
\newcommand{\listar}{l_{\istar}}
\newcommand{\SEfuvw}{\mathcal{E}_{\fuvw}}
\newcommand{\sbar}{S}
\newcommand{\lbar}{L}
\newcommand{\zti}{z_{i}^{(t)}}
\newcommand{\STTI}{\STT_{i}}
\newcommand{\globalnumi}{|\STTI|}
\newcommand{\cbarit}{\cbart_{i}}
\newcommand{\ctiu}{c^{(t)}_{i}[u]}
\newcommand{\ctv}{c^{(t)}[v]}
\newcommand{\ctw}{c^{(t)}[w]}
\newcommand{\gsh}{\textsc{GSH$_{T}$}\xspace}
\newcommand{\ns}{\textsc{NS}\xspace}
  \providecommand\BibTeX{{%
    \normalfont B\kern-0.5em{\scshape i\kern-0.25em b}\kern-0.8em\TeX}}}
\begin{document}

%%
%% The "title" command has an optional parameter,
%% allowing the author to define a "short title" to be used in page headers.
\title{CoCoS: Fast and Accurate Distributed Triangle Counting in Graph Streams}

%%
%% The "author" command and its associated commands are used to define
%% the authors and their affiliations.
%% Of note is the shared affiliation of the first two authors, and the
%% "authornote" and "authornotemark" commands
%% used to denote shared contribution to the research.
\author{Kijung Shin}
\affiliation{%
	\institution{KAIST}
	\city{Daejeon}
	\country{South Korea}
	\postcode{34141}
}
\email{kjiungs@kaist.ac.kr}
\orcid{0000-0002-2872-1526}

\author{Euiwoong Lee}
\affiliation{%
  \institution{University of Michigan}
  \city{Ann Arbor}
  \state{MI}
  \country{United States}
  \postcode{48109}
}
\email{euiwoong@umich.edu}

\author{Jinoh Oh}
\affiliation{%
	\institution{Carnegie Mellon University}
	\city{Pittsburgh}
	\state{PA}
	\country{United States}
	\postcode{15213}
}
\email{jinoho@cs.cmu.edu}

\author{Mohammad Hammoud}
\affiliation{%
	\institution{Carnegie Mellon University in Qatar}
	\city{Doha}
	\country{Qatar}
	\postcode{24866}
}
\email{mhhamoud@cmu.edu}

\author{Christos Faloutsos}
\affiliation{%
	\institution{Carnegie Mellon University}
	\city{Pittsburgh}
	\state{PA}
	\country{United States}
	\postcode{15213}
}
\email{christos@cs.cmu.edu}

\renewcommand{\shortauthors}{Shin et al}

\begin{abstract}
	\textit{Given a graph stream, how can we estimate the number of triangles
in it using multiple machines with limited storage? 
Specifically, how should edges be processed and sampled across the machines
for rapid and accurate estimation?}

The count of triangles (i.e., cliques of size three) has proven useful in numerous applications, including anomaly detection, community detection, and link recommendation.
For triangle counting in large and dynamic graphs,
recent work has focused largely on streaming algorithms and distributed algorithms but little on their combinations for ``the best of both worlds''.

In this work, we propose \cocos, a fast and accurate distributed streaming algorithm for estimating the counts of global triangles (i.e., all triangles) and local triangles incident to each node. Making one pass over the input stream, \cocos carefully processes and stores the edges across multiple machines so that the redundant use of computational and storage resources is minimized.
Compared to baselines, \cocos is {\it (a) Accurate:} giving up to {\bf $\mathbf{39\times}$ smaller estimation error}, {\it (b) Fast}: up to {\bf $\mathbf{10.4\times}$ faster}, scaling linearly with the size of the input stream, and {\it (c) Theoretically sound}: yielding unbiased estimates. % with variances dropping faster as the number of machines is scaled up.
\end{abstract}

\begin{CCSXML}
	<ccs2012>
	<concept>
	<concept_id>10002951.10003227.10003351</concept_id>
	<concept_desc>Information systems~Data mining</concept_desc>
	<concept_significance>500</concept_significance>
	</concept>
	<concept>
	<concept_id>10003752.10003809.10003635.10010038</concept_id>
	<concept_desc>Theory of computation~Dynamic graph algorithms</concept_desc>
	<concept_significance>500</concept_significance>
	</concept>
	<concept>
	<concept_id>10003752.10003809.10010055</concept_id>
	<concept_desc>Theory of computation~Streaming, sublinear and near linear time algorithms</concept_desc>
	<concept_significance>500</concept_significance>
	</concept>
	</ccs2012>
\end{CCSXML}

\ccsdesc[500]{Information systems~Data mining}
\ccsdesc[500]{Theory of computation~Dynamic graph algorithms}
\ccsdesc[500]{Theory of computation~Streaming, sublinear and near linear time algorithms}

\keywords{Graph Stream,
	Triangle Counting,
	Sampling,
	Streaming Algorithms,
	Distributed Algorithms}

\maketitle

\section{Introduction}
\label{sec:trifly:intro}
\begin{figure}[t]
	%\vspace{-2mm}
	\centering
	\hspace{-4mm}
	\subfigure[Fast and accurate]{
		\includegraphics[width= 0.235\linewidth]{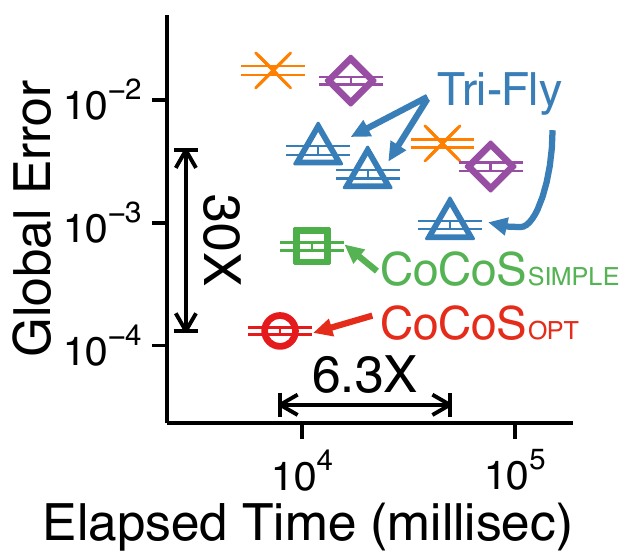}
		\label{fig:trifly:crown:tradeoff}
	}
	\subfigure[Scalable]{
		\includegraphics[width= 0.225\linewidth]{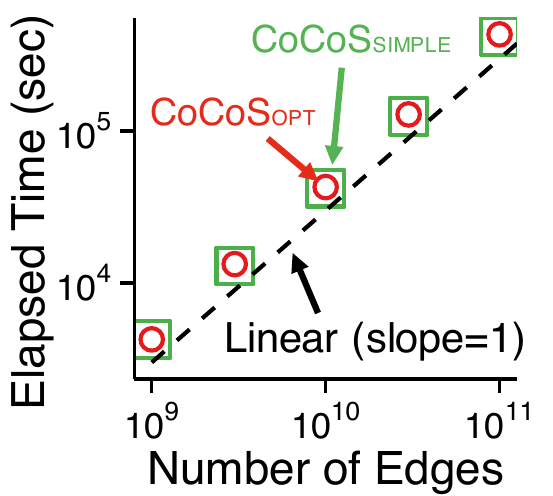}
		\label{fig:trifly:crown:scalability}
	}
	\subfigure[Unbaised with small variance (Theorems~\ref{thm:trifly:bias:method} and \ref{thm:trifly:var:method})]{
		\includegraphics[width= 0.235\linewidth]{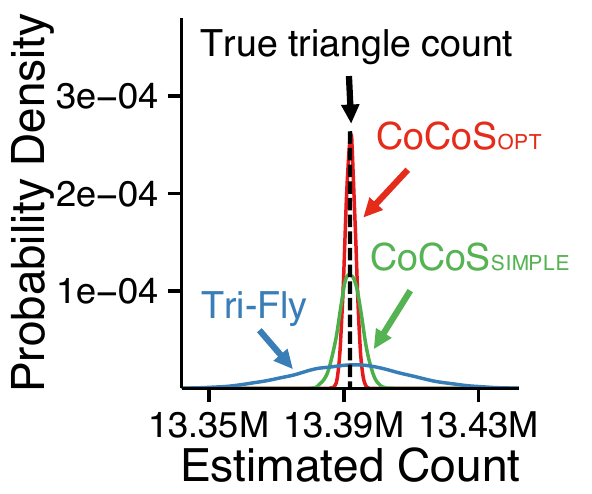}
		\label{fig:trifly:crown:unbias}
		\includegraphics[width= 0.245\linewidth]{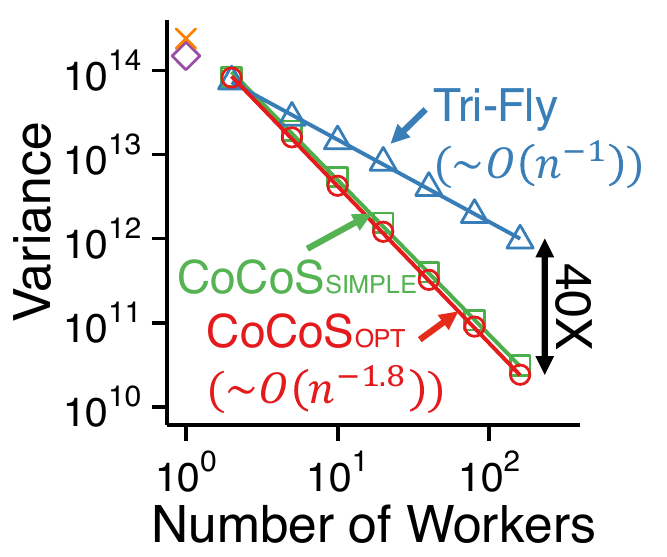}
		\label{fig:trifly:crown:variance}
	} \\
	\vspace{-2mm}
	\caption[Strengths of \cocos]{\label{fig:trifly:crown} \figsummary{Strengths of \cocos.}
	(a) {\it Fast and accurate:} \cocos is faster and more accurate than the baseline method, namely \trifly. (b) {\it Scalable:} The running time of \cocos is linear in the number of edges in the input stream. (c-d) {\it Unbiased with small variance:} \cocos gives unbiased estimates with variances dropping rapidly as we use more machines (Theorem~\ref{thm:trifly:bias:method}). %and \ref{thm:trifly:var:method}). %Random graphs are used for (b) and the \textsf{Email} dataset is used for the others. 
	See Section~\ref{sec:trifly:exp} for details.}
\end{figure}

Given a graph stream,
how can we utilize multiple machines for rapidly and accurately estimating the count of triangles in it?
How should we process and sample the edges across the machines to minimize the redundant use of computational and storage resources?
%By doing so, how much speed and accuracy gain can be achieved?

The count of triangles (i.e., cliques of size three) is a computationally expensive but important graph statistic that has proven useful in diverse areas.
For example, the counts of global triangles (i.e., all triangles) and local triangles (i.e., triangles associated with each node) lie at the heart of many crucial concepts in social network analysis and graph theory, including the transitive ratio \cite{newman2003structure}, local clustering coefficients \cite{watts1998collective}, social balance \cite{wasserman1994social}, and trusses \cite{cohen2008trusses}.
% are based on the count of triangles, which are , such as homophily \cite{mcpherson2001birds} and transitivity \cite{wasserman1994social}.
The global and local triangle counts have also been used in many data mining and database applications, including link recommendation \cite{tsourakakis2011spectral,epasto2015ego}, anomaly detection \cite{lim2018memory}, spam detection \cite{becchetti2008efficient,becchetti2010efficient}, dense subgraph mining \cite{wang2010triangulation}, degeneracy estimation \cite{shin2018pattern}, and query optimization \cite{bar2002reductions}.

For triangle counting in real-world graphs, many of which are large and evolving with new edges, recent work has focused largely on streaming algorithms \cite{kutzkov2013streaming,lim2018memory,stefani2017triest,pavan2013parallel,ahmed2014graph,ahmed2017sampling,pavan2013counting,shin2017wrs,shin2018think,pagh2012colorful}. 
Given a graph stream, which is a sequence of edges that may not fit in the underlying storage, these algorithms estimate the count of triangles while making a single pass over the stream.
Especially, these algorithms maintain and gradually update their estimates as each edge is received rather than operating on the entire graph.
Thus, they are appropriate for dynamic graphs, whose edges are received over time.
%As a result, a number of streaming algorithms for triangle counting (e.g., ) have been proposed in recent years.

Another popular approach is to extend triangle counting algorithms to distributed settings, including distributed-memory settings \cite{arifuzzaman2013patric} and \mapreduce settings \cite{cohen2009graph,suri2011counting,park2013efficient,park2014mapreduce,park2016pte,park2018enumerating}.
These distributed algorithms utilize computational and storage resources of multiple machines for speed and scalability.
%, allow us to perform exact triangle counting in web-scale graphs.
However, unlike streaming algorithms, they require all edges to be given at once.
Thus, they are not applicable to dynamic graphs, whose edges are received over time, or graphs that are too large to fit in the underlying storage.
% without supporting efficient updates of triangle counts in dynamic graphs.	
%cannot replace
%they do not provide the advantages of streaming algorithms. 

%To the best of our knowledge, no existing approach provides the advantages of both streaming and distributed algorithms, 
Can we have the best of both worlds? In other words, can we utilize multiple machines for rapid and accurate triangle counting in a graph stream?
A promising approach is \trifly (see Section~\ref{sec:trifly:method:baseline}),
where edges are broadcast to every machine that independently
runs a state-of-the-art streaming algorithm called \triestimp \cite{stefani2017triest}.
The final estimates are the averages of
the estimates provided by all the machines. 
Although \trifly successfully reduces estimation error inversely proportional to the
number of machines, \trifly incurs a redundant use of
computational and storage resources.

In this work, we propose \cocos ({\bf Co}nditional {\bf Co}unting and {\bf S}ampling), a fast and accurate distributed streaming algorithm that estimates the counts of global and local triangles.
\cocos gives the advantages of both streaming and distributed algorithms, significantly outperforming \trifly, as shown in Figure~\ref{fig:trifly:crown}.
\cocos minimizes the redundant use of computational and storage resources by carefully processing and sampling edges across distributed machines so that each edge is stored in at most two machines and each triangle is counted by at most one machine.
We theoretically and empirically demonstrate that \cocos has the following advantages:
%\vspace{-5mm}
\begin{itemize}[leftmargin=10pt]
	\item {\bf Accurate}: \cocos yields up to $\mathit{30\times}$ and $\mathit{39\times}$ {\it smaller estimation errors} for global and local triangle counts, respectively, than baselines with similar speeds (Figure~\ref{fig:trifly:crown:tradeoff}).
	\item {\bf Fast}: \cocos scales linearly with the number of edges in the input stream (Figure~\ref{fig:trifly:crown:scalability}), and it is up to  $\mathit{10.4\times}$ {\it faster} than baselines while giving more accurate estimates (Figure~\ref{fig:trifly:crown:tradeoff}).
	\item {\bf Theoretically Sound:} \cocos gives unbiased estimates. % with variances dropping rapidly as the number of machines is scaled up (Theorems~\ref{thm:trifly:bias:method} and \ref{thm:trifly:var:method}; and Figure~\ref{fig:trifly:crown:unbias}).
\end{itemize}
{\bf Reproducibility:} The source code and datasets used in this chapter are available at \textit{\url{http://dmlab.kaist.ac.kr/cocos/}}.

This paper is an extended version of \cite{shin2018trifly}, where we proposed \trifly (Section~\ref{sec:trifly:method:baseline}), which we regard as a baseline approach in this paper.
In this extended version, we propose a new algorithm, namely \cocos (Section~\ref{sec:trifly:method:algorithm}), which significantly outperforms \trifly in terms of speed and accuracy, as shown in Figure~\ref{fig:trifly:crown}.
Moreover, we theoretically analyze the accuracy and complexity of \cocos (Section~\ref{sec:trifly:analysis}). In addition, we conduct extensive experiments on $8$ real-world graph datasets to evaluate the efficiency, effectiveness, and scalability of \cocos and the effects of its parameters on the performance (Section~\ref{sec:trifly:exp}).

The rest of this paper is organized as follows.
In Section~\ref{sec:trifly:related}, we review some related studies.
In Section~\ref{sec:trifly:prelim}, we introduce some preliminary concepts, notations, and a formal problem definition. 
In Section~\ref{sec:trifly:method}, we present our proposed algorithm, namely
\cocos, and a baseline algorithm, namely \trifly.
In Section~\ref{sec:trifly:analysis}, we theoretically analyze the accuracy and complexity of them.
After sharing some experimental results in Section~\ref{sec:trifly:exp}, 
we provide conclusions in Section~\ref{sec:trifly:summary}.

\section{Related Work}
\label{sec:trifly:related}
\begin{table}[t]
	\centering
	\small
	\caption{
		\label{tab:comparison} Comparison of triangle counting methods. \cocos satisfies all the criteria while clearly outperforming \trifly in terms of speed and accuracy. 
	}
	\scalebox{0.97}{
		\begin{tabular}{l|c|c|c|c|c|c|c|c}
			\toprule
			& \rotatebox[origin=l]{90}{\cite{tsourakakis2009doulion,tsourakakis2008fast}} &  
			\rotatebox[origin=l]{90}{\cite{becchetti2008efficient,becchetti2010efficient}}&
			\rotatebox[origin=l]{90}{\cite{cohen2009graph,park2018enumerating,suri2011counting}} &
			\rotatebox[origin=l]{90}{\cite{arifuzzaman2013patric,pagh2012colorful}} & 
			\rotatebox[origin=l]{90}{\cite{jha2013space,ahmed2014graph,tangwongsan2013parallel,kallaugher2017hybrid}} &
			\rotatebox[origin=l]{90}{\cite{kutzkov2013streaming,stefani2017triest,lim2018memory,wang2019rept,shin2017wrs,shin2018think}} & 
			%	\rotatebox[origin=l]{90}{\scriptsize \cite{pavan2013parallel}} & 
			\rotatebox[origin=l]{90}{\trifly (Section~\ref{sec:trifly:method:baseline})} &
			\rotatebox[origin=l]{90}{\cocos ~~ (Section~\ref{sec:trifly:method:algorithm})} \\
			\midrule
			Single-Pass Stream Processing & & & & & {\footnotesize \cmark} & {\footnotesize \cmark} & {\bf \large \cmark}  &  {\bf \large \cmark} \\
			Approximation for Large Graphs* & {\footnotesize \cmark} & {\footnotesize \cmark} & & {\footnotesize \cmark} & {\footnotesize \cmark} & {\footnotesize \cmark} & {\bf \large  \cmark} &  {\bf \large \cmark}\\
			Global \& Local Triangle Counting & & {\footnotesize \cmark} & {\footnotesize \cmark} & & & {\footnotesize \cmark} & {\bf \large \cmark} &  {\bf \large \cmark} \\
			Larger Data with More Machines & & & {\footnotesize \cmark} & {\footnotesize \cmark} &  &  & {\bf \large  \cmark} &  {\bf \large \cmark} \\
			More Accurate with More Machines & & & & {\footnotesize \cmark} &  & & {\bf \large  \cmark} &  {\bf \large \cmark}\\
			\bottomrule
			\multicolumn{8}{l}{\small* graphs that are too large to fit in the underlying storage.}
		\end{tabular} 
	}
\end{table}

Numerous algorithms have been developed for triangle counting in many different settings, including shared-memory settings \cite{rahman2013approximate,shun2015multicore,kim2014opt} and external-memory settings \cite{hu2013massive,hu2014efficient,kim2014opt}.
We review related work focusing on streaming algorithms and distributed algorithms for triangle counting. See Table~\ref{tab:comparison} for a summary.

\subsection{Single-Machine Streaming Algorithms}
\label{sec:trifly:related:streaming}
Most streaming algorithms for triangle counting employ sampling for estimation with limited storage.

\noindent{\bf Counting global triangles.} Tsourakakis et al. \cite{tsourakakis2009doulion} proposed sampling each edge independently with equal probability $p$ and then estimating the global triangle count from that in the sampled graph using the fact that each triangle is sampled with probability $p^{3}$.
To increase the probability from $p^{3}$ to $p^{2}$,
Pagh and Tsourakakis \cite{pagh2012colorful} proposed the colorful sampling scheme where each node is colored with a color chosen uniformly at random
among $1/p$ colors and the edges whose endpoints have the same color are stored.
Kallaugher and  Price \cite{kallaugher2017hybrid} proposed sampling each node with equal probability $p$ and storing all edges between the sampled nodes and $p$ of the edges between sampled nodes and unsampled nodes. 
This requires fewer samples than the colorful sampling scheme for the same accuracy guarantee \cite{kallaugher2017hybrid}.
Jha et al. \cite{jha2013space} and Pavan et al. \cite{pavan2013counting} proposed sampling wedges (i.e., paths of length two) in addition to edges; and Ahmed et al. 
\cite{ahmed2014graph,ahmed2017sampling} proposed sampling edges with different probabilities, depending on the counts of adjacent sampled edges and incident triangles. 
Tangwongsan et al.
\cite{tangwongsan2013parallel} proposed a shared-memory, parallel, cache-oblivious version of \cite{pavan2013counting}. However, this parallelization is applicable only when edges arrive in batches rather than one by one.

\noindent{\bf Counting local triangles.} 
The colorful sampling scheme \cite{pagh2012colorful}, described in the previous paragraph, was applied to local triangle counting \cite{kutzkov2013streaming}.
Lim et al. \cite{lim2018memory} proposed \mascot, which uses simple uniform edge sampling but updates its estimates whenever an edge arrives even if it is not sampled.
Wang et al. \cite{wang2019rept} proposed \rept, which is a parallel version of \mascot in multi-core settings. Each processor maintains a separate sample of edges, while all processors update their estimates whenever an edge arrives.

De Stefani et al. \cite{stefani2017triest} proposed \triestimp, which uses reservoir sampling to maintain as many sample edges as storage allows.
Shin \cite{shin2017wrs} improved upon \triestimp in terms of accuracy under the assumption that edges are streamed in the order that they are created.
In addition, Becchetti et al. \cite{becchetti2008efficient,becchetti2010efficient} explored semi-streaming algorithms that require multiple passes over the stream.
Moreover, Shin et al. \cite{shin2018think} and De Stefani et al.\cite{stefani2017triest} explored the local triangle counting in a fully dynamic graph stream with both edge insertions and deletions.

Our algorithm adapts \triestimp for triangle counting within each machine since it estimates both global and local triangle counts accurately without any parameter or assumption.
Note that properly setting the parameters of \mascot \cite{lim2018memory} and \rept \cite{wang2019rept} requires the number of edges in the input graph stream, which is rarely known in advance.
However, any single-machine streaming algorithm can be used instead. 
For example, \wrs \cite{shin2017wrs} can be used if edges in the input graph stream are sorted in chronological order, and \rept \cite{wang2019rept} can be used instead of \triestimp. when each machine is equipped with multiple cores.
Moreover, \thinkd \cite{shin2018think} can be used if the input graph stream is fully-dynamic with both edge insertions and deletions. 
%as explored further in Section~D of the supplementary document \cite{supple}.

\vspace{-2mm}
\subsection{Distributed Batch Algorithms}
\vspace{-1mm}
Cohen \cite{cohen2009graph} proposed the first triangle counting algorithm on \mapreduce, which directly parallelizes a serial algorithm.
Suri and Vassilvitskii \cite{suri2011counting}, Park et al. \cite{park2013efficient,park2014mapreduce,park2016pte,park2018enumerating}, and Arifuzzaman et al. \cite{arifuzzaman2013patric} proposed dividing the input graph into overlapping subgraphs and assigning them to multiple machines, which count the triangles in the assigned subgraphs in parallel, in \mapreduce settings \cite{suri2011counting,park2013efficient,park2014mapreduce,park2016pte} and distributed-memory settings \cite{arifuzzaman2013patric}.
Recently, Ko and Kim \cite{ko2018turbograph++} proposed an external-memory distributed graph analytics system that supports triangle counting.
%Since the subgraphs are not disjoint, \gp produces a large amount of intermediate data, which were reduced in \cite{park2016pte}. % reduced the intermediate data. % shuffled in \MR.
%The idea of dividing the input graph into overlapping subgraphs was used also in a distributed memory setting .
These distributed algorithms are for exact triangle counting in static graphs, all of whose edges are given at once.
They are not applicable when edges are received over time and the edges may not fit in the underlying storage, as assumed in this work.

%Kutzkov and Pagh \cite{kutzkov2013streaming} discussed a distributed batch implementation of their colorful sampling idea, which is applicable to graph streams.
%In their implementation, each node is colored with a color chosen uniformly at random among $1/p$ colors to filter out
%the edges whose endpoints have different colors.
%The other edges are grouped by their end points' color and split among machines, which count the number of triangles in each group.
%If we apply this idea to graph streams, each machine needs to maintain the edges assigned to it, whose expected number is proportional to $p$, in its storage (usually main memory) to process future edges.
%The parameter $p$ needs to be set properly because
%too large $p$ results in `out-of-space' errors, while too small $p$ leads to underutilization of the storage and thus inaccurate estimation.
%To properly set $p$, the number of edges and degree distribution in the input graph stream need to be known in advance, which is unrealistic.
%In our proposed algorithm \cocos, each machine always fully utilizes its storage within a given budget, and no prior knowledge (the count of nodes, the count of edges, etc.) of the input stream is required. 

\vspace{-2mm}
\subsection{Distributed Streaming Algorithms}
\vspace{-1mm}
%\naive \cite{shin2018trifly} ....
Distributed streaming algorithms for triangle counting were first discussed by Pavan et al. \cite{pavan2013parallel} to handle multiple sources.
Their goal, however, was to reduce communication costs while giving the same estimation of their single-machine streaming algorithm \cite{pavan2013counting}.
Thus, using more machines, which are one per source, neither improves the speed nor the accuracy of the estimation.
In this work, however, we utilize multiple machines for faster and more accurate estimation.

%Our goal of using multiple machines for rapid and accurate estimation was pursued by
%Shin et al. \cite{shin2018trifly}.
%In their proposed algorithm \naive, every edge is broadcast to every worker that independently runs \triestimp.
%The workers send their estimates to the aggregators, which give the final estimates by averaging the received estimates.
%Although \naive gives unbiased estimates whose variances decrease inversely proportional to the number of workers,
%it incurs a highly redundant use of computational and storage resources.
%Specifically, in the worst case, each edge is replicated and stored in every worker, and each triangle is counted by every worker. Due to this redundancy, no matter how many workers are used, \naive cannot guarantee exact triangle counts if the input stream does not fit in each machine.
%Our proposed algorithm \method significantly improves upon \naive in terms of speed and accuracy by minimizing the redundancy in storage and computation.
%Specifically, in \method, each edge is stored in at most two machines and each triangle is counted by at most one machine (see Lemma~\ref{lemma:property} in Section~\ref{sec:trifly:method:properties}).

%\subsection{Other Algorithms}
%...

\section{Preliminaries and Problem Definition}
\label{sec:trifly:prelim}
In this section, we first introduce some notations and concepts used throughout this paper.
Then, we define the problem of distributed global and local triangle counting in a graph stream.

\begin{table}[t]
	\centering
	\small
	\caption[Table of frequently-used symbols]{\label{tab:trifly:symbols} \figsummary{Table of frequently-used symbols.}}
%	\vspace{-2mm}
	%\scalebox{0.95}{
		\begin{tabular}{r|l}
			\toprule
			\textbf{Symbol} & \textbf{Definition} \\
			%\midrule
			%\multicolumn{2}{l}{Notations for a Static Tensor} \\
			\midrule
			\multicolumn{2}{l}{Notations for Graph Streams (Section~ \ref{sec:trifly:prelim})} \\
			\midrule
			$\gstream$ & input graph stream \\
			$\et$ & edge that arrives at time $t\in\{1,2,...\}$\\
			$\pair$ & edge between nodes $u$ and $v$ \\
			$\tuv$ & arrival time of edge $\pair$\\
			$\triple$ & triangle composed of nodes $u$, $v$, and $w$ \\
			$\SGT=(\SVT,\SET)$ & graph at time $t$\\
			$\STT$ & set of global triangles in $\SGT$\\
			$\STTu$ & set of local triangles containing node $u$ in $\SGT$\\
			%		$\globalnum$ & number of global triangles in $\SGT$\\
			%		$\localnum{u}$ & number of local triangles associated with node $u$ in $\SGT$\\
			\midrule
			\multicolumn{2}{l}{Notations for Algorithms (Section~\ref{sec:trifly:method})} \\
			\midrule
			$\workernum$ & number of workers \\
			$\budget$ & maximum number of edges stored in each worker \\
			%$\SEI$ & set of edges stored in worker $i$\\
			%$\SGI=(\SVI,\SEI)$ & graph composed of the edges in $\SE_{i}$ \\
			%$\SNI[u]$ & set of neighbors of node $u$ in $\SGI$\\
			$\cbar$ & estimate of the global triangle count \\
			$\cu$ & estimate of the local triangle count of node $u$ \\
			$f:\SV \rightarrow \{1,...,k\}$ & function assigning nodes to workers  \\
			$\li$ & load of the $i$-th worker \\ %($=|\SAI|$)\\
			$\theta$ & tolerance threshold for load difference \\
			\midrule
			\multicolumn{2}{l}{Notations for Analysis (Section~\ref{sec:trifly:analysis})} \\
			\midrule
			$\pt$ & number of Type~1 triangle pairs in $\SGT$ \\
			$\qt$ & number of Type~2 triangle pairs in $\SGT$ \\
			\bottomrule
		\end{tabular}
\end{table}

%We introduce notations and give a formal problem definition. 

%\subsection{Notations}
%\label{sec:trifly:prelim:notation}
\subsection{Notations and Concepts} 
We list the frequently-used symbols in Table~\ref{tab:trifly:symbols}.
Consider a graph stream $\gstream$, where $\et$ denotes the undirected %non-parallel non-self 
edge that arrives at time $t\in\{1,2,...\}$.
Then, let $\SGT=(\SVT,\SET)$ be the graph composed of the nodes and edges arriving at time $t$ or earlier. %\footnote{i.e., $\SVZ:=\emptyset$, $\SEZ:=\emptyset$, $\SVTP:= \SVT\cup \etp$, and $\SETP:= \SET\cup\{\etp\}$.}
We use the unordered pair $\pair\in\SET$ to indicate the edge between two distinct nodes $u,v\in\SVT$.
We denote the arrival time of each edge $\pair$ by $\tuv$. %\footnote{i.e., $(\et=\pair) \equiv (\tuv=t)$.}
We use the unordered triple $\triple$ to indicate the triangle (i.e., three nodes every pair of which is connected by an edge) composed of three distinct nodes $u,v,w\in\SVT$.
We let $\STT$ be the set of {\it global triangles} in $\SGT$ (i.e., all triangles in $\SGT$), and for each node $u\in\SVT$, let $\STTu\subseteq\STT$ be the set of {\it local triangles of $u$} in $\SGT$ (i.e., all triangles containing $u$).

%\vspace{-2mm}
%\subsection{Problem Definition}
%\label{sec:trifly:prelim:problem}
\subsection{Problem Definition}
\label{sec:trifly:prelim:problem}
In this work, we consider the problem of estimating the counts of global and local triangles in a graph stream (i.e., a sequence of edges) using multiple machines with limited storage. Specifically, we assume the following realistic conditions:
\ben
\item[C1] {\bf Knowledge free:} No prior knowledge of the input graph stream (e.g., the counts of nodes and edges) is available.
\item[C2] {\bf Shared nothing environment:} Data stored in the storage of a machine is not accessible by the other machines.
%		\item[C3] {\bf Finite storage capacity:} up to $\workernum$ $(\geq 2)$ edges can be stored in the storage of each machine, while the number of edges in the input stream can be greater than $\workernum$ or even $nk$.
%The outputs (i.e., $1$ global triangle count and $|\SVT|$ local counts) are aggregated and stored in separate machines (i.e., aggregators).
\item[C3] {\bf One pass:} Edges are accessed one by one in their arrival order. Past edges are not accessible by a machine unless they are stored in the given storage of the machine.
\een

%, under the following realistic conditions:
%\begin{enumerate}[leftmargin=15pt]
%\item[C1] {\bf Knowledge free:} no prior knowledge of the input graph stream (e.g., the size of the stream) is available.
%\item[C2] {\bf Shared nothing environment:} data stored in the storage of each machine is not accessible from the other machines.
%\item[C3] {\bf Finite storage capacity:} up to $\workernum$ $(\geq 2)$ edges can be stored in the storage of each machine, while the number of edges in the input stream can be greater than $\workernum$ or even $nk$.
%\item[C4] {\bf One pass:} edges are processed one by one in their arrival order. Past edges cannot be accessed unless they are stored (in the finite storage stated in C3).
%\end{enumerate}
%Based on these conditions, we define the problem of distributed estimation of global and local triangle counts in a graph stream.

Under these conditions, we define the problem of distributed estimation of global and local triangle counts in a graph stream.
\begin{problem}[Distributed Estimation of Global and Local Triangle Counts in a Graph Stream\label{problem:trifly}] \ 
\bit
	\item {\bf Given:} 
	 a graph stream $\gstream$ and  $\workernum$ distributed storages in each of which up to $\budget$ $(\geq 2)$ edges can be stored
%	\item {\bf Find:} distributed sampling and counting algorithms
	\item {\bf Maintain:} estimates of the global triangle count $\globalnum$ and the local triangle counts  $\{(u,\localnum)\}_{u\in\SVT}$ for current time $t\in\ints$,
	\item {\bf to Minimize:} the biases and variances of the estimates.
%	\item {\bf Subject to:} the following realistic conditions: \\
%	
%	\ben
%		\item[C1] {\bf Knowledge free:} No prior knowledge of the input graph stream (e.g., the counts of nodes and edges) is available.
%		\item[C2] {\bf Shared nothing environment:} Data stored in the storage of a machine is not accessible by the other machines.
%%		\item[C3] {\bf Finite storage capacity:} up to $\workernum$ $(\geq 2)$ edges can be stored in the storage of each machine, while the number of edges in the input stream can be greater than $\workernum$ or even $nk$.
%		%The outputs (i.e., $1$ global triangle count and $|\SVT|$ local counts) are aggregated and stored in separate machines (i.e., aggregators).
%		\item[C3] {\bf One pass:} Edges are accessed one by one in their arrival order. Past edges are not accessible by a machine unless they are stored in the given storage of the machine.
%	\een
\eit
\end{problem}

There can be multiple ways of measuring estimation error, including those considered in Section~\ref{sec:trifly:experiments:settings}.
Instead of aiming to minimize a specific measure of estimation error, we use a general approach of simultaneously reducing the biases and variances of estimates.
In Section~\ref{sec:trifly:exp}, we evaluate the proposed algorithms using five different measures of estimation error.

%We further assume that the communication structure described in Figure~\ref{fig:structure}.
%That is, data are streamed from the source to the master, from the master to workers, and from workers to the aggregator, which gives  final estimates.

%.
%Our approach reduces different measures of estimation error robustly, as shown in Section~\ref{sec:trifly:experiments:tradeoff}.

\section{Proposed Algorithms: \trifly and \cocos}
\label{sec:trifly:method}
In this section, we present two distributed streaming algorithms for Problem~\ref{problem:trifly}. 
First, we provide an overview with the common structure and notations in Section~\ref{sec:trifly:method:common}.
Then, we present a baseline algorithm \trifly and our proposed algorithm \cocos ({\bf Co}nditional {\bf Co}unting and {\bf S}ampling) in Sections~\ref{sec:trifly:method:baseline} and \ref{sec:trifly:method:algorithm}, respectively.
After that, we discuss lazy aggregation in Section~\ref{sec:trifly:method:lazy}.
Lastly, we discuss extensions of the algorithms with multiple sources, masters, and aggregators in Section~\ref{sec:trifly:method:multiple}
%Lastly, we theoretically analyze the accuracy of both algorithms in Section 4.4 and their time and space complexities in

\begin{figure}
	\centering
	\includegraphics[width= 0.8\linewidth]{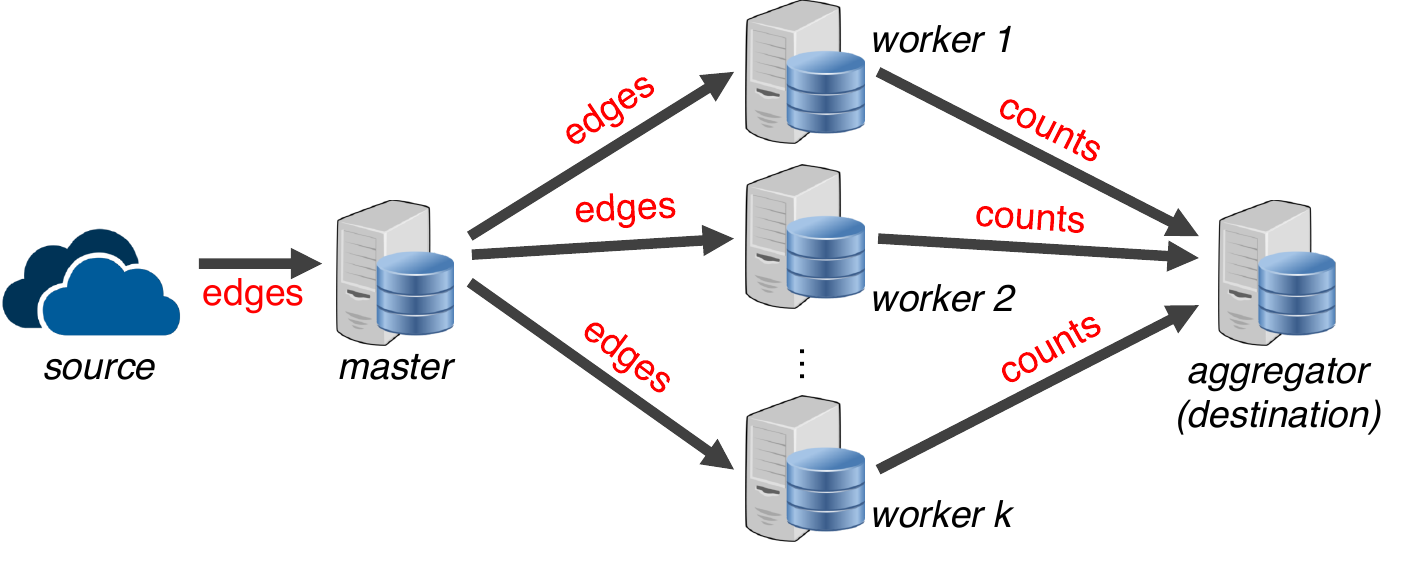}
	\caption[Roles of machines and the flow of data in \cocos]{\label{fig:structure} \figsummary{Roles of machines and the flow of data in \trifly and \cocos.} Extensions of them with multiple sources, masters, and aggregators are discussed in Section~\ref{sec:trifly:method:multiple}.}
	%\vspace{-2.5mm}
\end{figure}

\subsection{Overview}
\label{sec:trifly:method:common}

Figure~\ref{fig:structure} describes the roles of machines and the flow of data in the algorithms described in the following subsections.
For simplicity, we assume one source, one master, and one aggregator although 
\trifly and \cocos in general (except for \cocosopt, i.e., the optimized version of \cocos described in Section~\ref{sec:trifly:method:opt}) are easily extended to multiple sources, masters, and aggregators, as discussed in Section~\ref{sec:trifly:method:multiple}.
Edges are streamed from the source to the master, which unicasts or broadcasts the edges to the workers. 
Each worker counts the global and local triangles from the received edges using its local storage, and it sends the counts to the aggregator.
Since we assume a shared-nothing environment in Section~\ref{sec:trifly:prelim:problem}, each worker cannot access data stored in the other workers.
The counts are aggregated in the aggregator, which gives the final estimates of the global and local triangle counts.

Before describing the algorithms, we define the notations used in them. We use $\workernum$ to denote the number of workers and use $\budget$ to denote the storage budget per worker (i.e., the maximum number of edges that we store in each worker). 
For each $i\in\workerset$, we let $\SEI$ be the edges currently stored in the $i$-th worker and let $\SGI=(\SVI,\SEI)$ be the graph composed of the edges in $\SEI$. 
For each node $u\in \SVI$, $\SNI[u]$ denotes the neighboring nodes of $u$ in $\SGI$.
Since its storage is limited, each worker uses sampling to decide which edges to store.
We use $\li$ to denote the number of edges that the $i$-th worker has considered for sampling so far.
Lastly, $\cbar$ indicates the estimate of the global triangle count, and for each node $u$, $\cu$ indicates the estimate of the local triangle count of $u$.

\begin{algorithm}
	\caption{\trifly: Baseline Algorithm}\label{alg:trifly}
	\begin{algorithmic}[1]
		\Require (1) input graph stream: $\gstream$
		\Statex \ \ \quad (2) storage budget in each worker: $\budget$ $(\geq 2)$
		\Ensure (1) estimated global triangle count: $\cbar$
		\Statex	\ \qquad (2) estimated local triangle counts: $\cu$ for each node $u$
		\vspace{1mm}
		\Statex \hspace{-6mm} {\bf Master}:
		\vspace{1mm}
		\For{each edge $\uv$ from the source} \label{alg:trifly:master:start}
		\State	broadcast $\uv$ to every worker \label{alg:trifly:master:end}
		\EndFor
		\vspace{1mm}
		\Statex \hspace{-6mm} {\bf Worker} (each worker with index $i$):
		\vspace{1mm}
		\State $\SEI \leftarrow \emptyset$; $\li \leftarrow 0$ 	\label{alg:trifly:worker:start}
		\For{each edge $\uv$ from the master \label{alg:trifly:worker:receive}} 
		\State \textsc{count}($\uv$)  \label{alg:trifly:worker:count:call}
		\State \textsc{sample}($\uv$) \label{alg:trifly:worker:sample:call}
		\EndFor
		\Procedure{count}{$\uv$}:
		\State $sum \leftarrow 0$ \label{alg:trifly:worker:count:start}
		\For{each node $w\in \SNI[u]\cap\SNI[v]$}  \label{alg:trifly:worker:intersect}
		\State send $(w, 1/(\piuvw))$ to the aggregator \label{alg:trifly:worker:send:one}
		\State $sum\leftarrow sum+1/(\piuvw)$ \label{alg:trifly:worker:send:sum} \Comment{see Eq.~\eqref{eq:trifly:prob:discover} for $\piuvw$}
		\EndFor
		\State send $(*,sum)$, $(u, sum)$ and $(v, sum)$ to the aggregator \label{alg:trifly:worker:send:two} \label{alg:trifly:worker:count:end} 
		\Statex
		\Comment{`$*$' denotes the global triangle count}
		\EndProcedure
		\Procedure{sample}{$\uv$}:
		\State $\li \leftarrow \li +1 $. \label{alg:trifly:worker:sample:start}
		\If{$|\SEI| < \budget$}
		$\SEI \leftarrow \SEI \cup \{\uv\}$ \label{alg:trifly:worker:sample:always}
		\Else \ with probability $\budget/\li$, replace a uniform random edge in $\SEI$ with $\uv$	\label{alg:trifly:worker:sample:conditional:start}
		\label{alg:trifly:worker:sample:conditional:end} \label{alg:trifly:worker:sample:end} \label{alg:trifly:worker:end}
		\EndIf	
		\EndProcedure
		\vspace{1mm}
		\Statex \hspace{-6mm} {\bf Aggregator}: 
		\vspace{1mm}
		\State $\cbar\leftarrow 0$ \label{alg:trifly:aggregator:start}
		\State initialize an empty map $c$ with default value $0$
		\For{each pair ($u$, $\delta$) from the workers}
		\If{$u=*$} 
		$\cbar \leftarrow \cbar + \delta/\workernum$ \label{alg:trifly:aggregator:increase:one}
		\Else \ $\cu \leftarrow \cu + \delta/\workernum$ \label{alg:trifly:aggregator:increase:two}
		\EndIf
		\label{alg:trifly:aggregator:end}
		\EndFor
	\end{algorithmic}
\end{algorithm}

\subsection{Baseline Algorithm: \trifly}
\label{sec:trifly:method:baseline}

We present \trifly, a baseline algorithm for Problem~\ref{problem:trifly}. 
A pseudo code of \trifly is given in Algorithm~\ref{alg:trifly}.
We first describe the master, the workers, and the aggregator in \trifly. 
Then, we discuss its advantages and disadvantages.

\smallsectionnoline{Master (lines~\ref{alg:trifly:master:start}-\ref{alg:trifly:master:end})}:
The master simply broadcasts every edge from the source to every worker. 

\smallsectionnoline{Workers (lines~\ref{alg:trifly:worker:start}-\ref{alg:trifly:worker:end})}:
Each worker independently estimates the global and local triangle counts using \triestimp, a state-of-the-art streaming algorithm based on reservoir sampling.
Note that the workers use different random seeds and thus give different results.
Each worker $i\in\workerset$ starts with an empty storage (i.e., $\SEI=\emptyset$) (line~\ref{alg:trifly:worker:start} of Algorithm~\ref{alg:trifly}).
Whenever it receives an edge $\uv$ (line~\ref{alg:trifly:worker:receive}) from the master, the worker first counts the triangles with $\uv$ in its local storage by calling the procedure \textsc{count} (line~\ref{alg:trifly:worker:count:call}),
Then, the worker calls procedure \textsc{sample} (line~\ref{alg:trifly:worker:sample:call}) to store $\uv$ in its local storage with non-zero probability.
We describe the procedures \textsc{sample} and \textsc{count} below.

In the procedure \textsc{sample} (lines~\ref{alg:trifly:worker:sample:start}-\ref{alg:trifly:worker:sample:end}), each worker $i\in\workerset$ first increases $\li$, the number of edges considered for sampling, by one since the new edge $\uv$ is being considered.
If its local storage is not full (i.e., $|\SEI|<\budget$), the worker stores $\uv$ by adding $\uv$ to $\SEI$ (line~\ref{alg:trifly:worker:sample:always}).
If the local storage is full (i.e., $|\SEI|=\budget$), the worker stores $\uv$ with probability $\budget/\li$ by replacing an edge chosen uniformly at random  in $\SEI$ with $\uv$ (lines~\ref{alg:trifly:worker:sample:conditional:start}-\ref{alg:trifly:worker:sample:conditional:end}).
This is the standard reservoir sampling, which guarantees that each of the $\li$ edges is sampled and included in $\SEI$ with the equal probability $\min(1,\budget/\li)$.

In the procedure \textsc{count} (lines~\ref{alg:trifly:worker:count:start}-\ref{alg:trifly:worker:count:end}),
each worker $i\in\workerset$ finds the common neighbors of nodes $u$ and $v$ in graph $\SGI$, a graph consisting of the edges $\SEI$ in its local storage (line~\ref{alg:trifly:worker:intersect}).
Each common neighbor $w$ indicates the existence of triangle $\triple$.
Thus, for each common neighbor $w$, the worker increases the global triangle count, and the local triangle counts of nodes $u$, $v$, and $w$ by sending the increases to the aggregator (lines~\ref{alg:trifly:worker:send:one} and \ref{alg:trifly:worker:send:two}).
The amount of increase in the counts is $1/(\piuvw)$ for each triangle $\triple$, where
\begin{equation}
	\piuvw:=\min\left(1,\frac{\budget(\budget-1)}{\li(\li-1)}\right) \label{eq:trifly:prob:discover}
\end{equation}
is the probability that triangle $\triple$ is discovered by worker $i$.
In other words, $\piuvw$ is the probability that both $\vw$ and $\wu$ are in $\SEI$ when $\uv$ arrives at worker $i$.\footnote{
	For $\vw$ to be in $\SEI$, $\vw$ should be one among $\budget$ edges sampled from $\li$ edges, i.e., $p[\vw\in\SEI]=\min(1,\budget/\li)$.
		For $\wu$ to be in $\SEI$, given $\vw$ is in $\SEI$,  $\wu$ should be one among $\budget-1$ edges sampled from $\li-1$ edges, i.e., $p[\wu\in\SEI|\vw\in\SEI]=\min(1,(\budget-1)/(\li-1))$.
		Eq.~\eqref{eq:trifly:prob:discover} follows from $\piuvw=p[\wu\in\SEI, \vw\in\SEI]=p[\vw\in\SEI]\times p[\wu\in\SEI|\vw\in\SEI].$}
Increasing counts by $1/(\piuvw)$ guarantees that the expected amount of the increase sent from each worker is exactly $1(=p_{i}[uvw]\times1/(p_{i}[uvw])+(1-p_{i}[uvw])\times0)$ for each triangle, enabling \trifly to give unbiased estimates. See Theorem~\ref{thm:trifly:bias:naive} in Section~\ref{sec:trifly:analysis:accuracy} for a detailed proof.
%On the other hand, increasing the triangle counts by one for each triangle ends up underestimating the their counts due to missing triangles.

\smallsectionnoline{Aggregator (lines~\ref{alg:trifly:aggregator:start}-\ref{alg:trifly:aggregator:end})}:
%The estimated triangle counts from the workers are averaged in the aggregator .
The aggregator maintains and updates the estimate $\cbar$ of the global triangle count and the estimate $\cu$ of the local triangle count of each node $u$.
Specifically, it increases the estimates by $1/\workernum$ of what it receives, averaging the increases sent from the workers (lines~\ref{alg:trifly:aggregator:increase:one} and \ref{alg:trifly:aggregator:increase:two}).

\smallsectionnoline{Advantages and Disadvantages of \trifly}:
Our theoretical and empirical analyses in the following sections show the advantages of \trifly. Specifically, \trifly gives unbiased estimates, and the variances of the estimates decrease inversely proportional to the number of workers
(see Theorems~\ref{thm:trifly:bias:naive} and \ref{thm:trifly:var:naive} in Section~\ref{sec:trifly:analysis:accuracy}).
Moreover, \trifly gives the same results as \triestimp \cite{stefani2017triest}, a state-of-the-art streaming algorithm, when a single worker is used.

However, \trifly incurs a redundant use of computational and storage resources.
Specifically, each edge can be replicated and stored in up to $\workernum$ workers, and each triangle can be counted repeatedly by up to $\workernum$ workers.
%Especially, when $t\leq \budget+1$, every worker stores the same set of edges (i.e., $\forall i\in \workerset$, $\SEI=\{e^{(1)},e^{(2)},...,e^{(t-1)}\}$) and  count exactly the same triangles.
Due to its redundant use of storage, no matter how many workers are used, \trifly cannot guarantee exact triangle counts if the number of edges so far (i.e., $t$) is greater than $\budget+1$.
%Moreover, \trifly guarantees the exact triangle counts only if the input graph stream fits in the storage of each worker. Otherwise, \trifly cannot guarantee exact triangle counts regardless of the number of workers.

%This redundant use of computational and storage resources is addressed in \cocos, our proposed method explained in the following section.
%\cocos guarantees that each edge is stored at most two workers, and each triangle is counted by at most one worker.

\begin{algorithm}[t]
	\caption{\cocos: Proposed Algorithm}\label{alg:cocos}
	\begin{algorithmic}[1] 
		\Require (1) input graph stream: $\gstream$
		\Statex \ \ \quad (2) storage budget in each worker: $\budget$ $(\geq 2)$
		\Ensure (1) estimated global triangle count: $\cbar$
		\Statex	\  \qquad (2) estimated local triangle counts: $\cu$ for each node $u$
		\vspace{1mm}
		\Statex \hspace{-6mm} {\bf Master}:
		\vspace{1mm}
		\For{each edge $\uv$ from the source} \label{alg:cocos:master:start}
		\If{$f(u)=f(v)$} 
		send $\uv$ to worker $f(u)$  \Comment{Case~\lucky} \label{alg:cocos:master:case1}
		\Else \ send $\uv$ to every worker \Comment{Case~\unlucky} \label{alg:cocos:master:case2}
		\label{alg:cocos:master:end}
		\EndIf
		\EndFor
		\vspace{1mm}
		\Statex \hspace{-6mm} {\bf Worker} (each worker with index $i$):
		\vspace{1mm}
		\State $\SEI \leftarrow \emptyset$; $\li \leftarrow 0$ \label{alg:cocos:worker:start}
		%\State initialize an empty map $\ci$ with default value $0$
		\For{each edge $\uv$ from the master \label{alg:cocos:worker:receive}}
		\State \textsc{count}($\uv$) \Comment{see Algorithm~\ref{alg:trifly} for \textsc{count()}} \label{alg:cocos:worker:count:call}
		\If{$f(u)=i$ or $f(v)=i$}  \Comment{Case~\assign} \label{alg:cocos:worker:case1}
		\State \textsc{sample}($\uv$) \Comment{see Algorithm~\ref{alg:trifly} for \textsc{sample()}} \label{alg:cocos:worker:sample:call} \label{alg:cocos:worker:end}
		\EndIf
		%\If{aggregation is required by policy $P$} \textsc{aggregate}() 
		%\EndIf
		\EndFor
		\vspace{1mm}
		\Statex \hspace{-6mm} {\bf Aggregator}: 
		\vspace{1mm}
		\State $\cbar\leftarrow 0$ \label{alg:cocos:aggregator:start}
		\State initialize an empty map $c$ with default value $0$
		\For{each pair ($u$, $\delta$) from the workers}
		\If{$u=*$} 
		$\cbar \leftarrow \cbar + \delta$ \label{alg:cocos:aggregator:increase:one}
		\Else \ $\cu \leftarrow \cu + \delta$ \label{alg:cocos:aggregator:increase:two}
		\EndIf \label{alg:cocos:aggregator:end}
		\EndFor
	\end{algorithmic}
\end{algorithm}

\subsection{Proposed Algorithm: \cocos}
\label{sec:trifly:method:algorithm}

To address the drawbacks of \trifly, we propose \cocos, an improved algorithm for Problem~\ref{problem:trifly}.
We first provide the main idea behind \cocos. Then, we describe the master, the workers, and the aggregator in \cocos in detail.
After that, we prove its properties.
Lastly, we discuss adaptive node mapping.

\subsubsection{Main Idea}
\label{sec:trifly:method:algorithm:goal} 

When designing \cocos, we aim to minimize the redundant use of computational and storage resources for rapid and accurate estimation of global and local triangle counts. 
Specifically, we design \cocos so that it distributes and stores edges across workers while satisfying the following desirable properties:
\begin{enumerate}
	\item[P1] {\bf Limited Redundancy in Storage}: Each edge is stored in at most two workers.
	\item[P2] {\bf No Redundancy in Computation}: Each triangle is counted by at most one worker.
	\item[P3] {\bf No Disintegrated Triangles}: For each triangle, at least one worker receives all three edges of the triangle, and with non-zero probability, the worker stores both the first and second edges of the triangle when the last edge of the triangle arrives.
\end{enumerate}
P1 is desirable for accuracy. Less redundancy in storage enables us to store more unique edges from which we can estimate triangle counts more accurately.
P2 is desirable for speed. 
P3 is necessary for \cocos to give (almost) exact estimates when storage is (almost) enough.
P3 is what we aim not to compromise while reducing the redundancy in storage and computation. 
For example, further reducing redundancy in storage by 
storing each edge in at most one worker compromises P3 unless all edges are stored in the same worker.\footnote{Consider a graph stream where the edges of a chain graph of indefinite length arrive first and then some other edges arrive. It is not known in advance which edges will arrive later, as stated in Section~\ref{sec:trifly:prelim:problem}. If we can store each edge in at most one worker, in order to guarantee P3, we have no choice but to store all the edges of the chain graph in the same worker. Assume that two edges of the chain graph are stored in different workers. Then, there always exist three nodes $u$, $v$, and $w$ where $\uv$ and $\vw$ are stored in different workers. If $\wu$ arrives after all edges of the chain graph arrive, P3 does not hold for the triangle $\triple$. 
}

\subsubsection{Algorithm Description}
\label{sec:trifly:method:algorithm:detail} \
A pseudo code of \cocos is given in Algorithm~\ref{alg:cocos}.

\smallsectionnoline{Master (lines~\ref{alg:cocos:master:start}-\ref{alg:cocos:master:end})}:
The master requires a function $f$ that maps each node to a worker. We assume that $f$ is given and discuss it later in Section~\ref{sec:trifly:method:opt}. The master sends each edge $\pair$ to the workers depending on $f(u)$ and $f(v)$ as follows:
\bit
	\item Case \lucky (line~\ref{alg:cocos:master:case1}): If nodes $u$ and $v$ are assigned to the same worker by $f$ (i.e., $f(u)=f(v)$), then the master sends $\pair$ only to the worker (i.e., the $f(u)$-th worker).
	\item Case \unlucky (line~\ref{alg:cocos:master:case2}): Otherwise (i.e., if $f(u)\neq f(v)$), the master sends $\pair$ to every worker.
\eit

Consider triangles where $\uv$ is their last edge closing them.
In the first case (i.e., case \lucky), the worker $f(u)$ ($=f(v)$) has received the other two edges of such a triangle and with non-zero probability stored both (see the description of workers below) when $\uv$ arrives.
Thus, sending $\uv$ to the worker $f(u)$ is enough to satisfy P3 in Section~\ref{sec:trifly:method:algorithm:goal}.
In the second case (i.e., case \unlucky), however, neither the worker $f(u)$ nor the worker $f(v)$ can store both the other two edges of such a triangle (see the description of workers below).
Thus, $\uv$ is broadcast so that for each such a triangle $\triple$, the worker $f(w)$ receives $\uv$.
Note that the worker $f(w)$ has received the other two edges (i.e., $\vw$ and $\wu$) and with non-zero probability stored both, and thus P3 is satisfied.

\smallsectionnoline{Workers (lines~\ref{alg:cocos:worker:start}-\ref{alg:cocos:worker:end})}:
The workers start with an empty storage (line~\ref{alg:cocos:worker:start}).
Whenever they receive an edge $\pair$ from the master (line~\ref{alg:cocos:worker:receive}), they count the triangles with $\pair$ in its local storage by calling the procedure \textsc{count} (line~\ref{alg:cocos:worker:count:call}), as in \trifly.
However, the procedure \textsc{sample} is called selectively depending on $f(u)$ and $f(v)$ as follows:
\bit
	\item Case \assign (line~\ref{alg:cocos:worker:case1}): If $f(u)=i$ or $f(v)=i$,  the $i$-th worker considers storing $\pair$ in its local storage by calling \textsc{sample}.
	\item Case \noassign: Otherwise (i.e., if $f(u)\neq i \neq f(v)$), the $i$-th worker simply discards $\pair$ without considering storing it.
\eit

Note that in only one (if $f(u)=f(v)$) or two (if $f(u)\neq f(v)$) workers, the procedure \textsc{sample} is called, and thus $\uv$ is stored with non-zero probability. Thus, P1 in Section~\ref{sec:trifly:method:algorithm:goal} is satisfied.
Recall that, within the procedure \textsc{sample}, $\li$, the number of edges considered for sampling, is increased by one since the new edge $\uv$ is being considered.
Recall that within the procedure \textsc{count}, $1/(\piuvw)$ is computed for each discovered triangle $\triple$. Note that $\li$ is at least two, since $\vw$ and $\wu$ are sampled, and thus the denominator of $\piuvw$ (i.e,. $\li(\li-1)$) cannot be zero. Also note that $\piuvw$ cannot be zero since $b$ is assumed to be at least two.

\smallsectionnoline{Aggregator (lines~\ref{alg:cocos:aggregator:start}-\ref{alg:cocos:aggregator:end}):}
The aggregator applies each received update to the corresponding estimate.
Note that, different from the aggregator in \trifly, the aggregator in \cocos does not divide received updates by the number of workers (i.e., $\workernum$). 
This is because in \cocos, only one worker can count each triangle with non-zero probability, satisfying P2 in Section~\ref{sec:trifly:method:algorithm:goal}, while in \trifly, all $\workernum$ workers can count each triangle with non-zero probability.
We prove this in the following subsection.

\subsubsection{Basic Properties}
\label{sec:trifly:method:algorithm:properties}
\cocos satisfies P1, P2, and P3, which are the desirable properties described in Section~\ref{sec:trifly:method:algorithm:goal}, as stated in Lemma~\ref{lemma:trifly:property}.

\begin{lem}[Properties of \cocos\label{lemma:trifly:property}]	
	Algorithm~\ref{alg:cocos} satisfies P1, P2, and P3.
\end{lem}

%The desirable properties of \cocos that we use for the theoretical analysis in Section~\ref{sec:trifly:analysis} are stated in Lemma~\ref{lemma:trifly:property}.
%\begin{lem}[Properties of \cocos\label{lemma:trifly:property}]	
%	Algorithm~\ref{alg:cocos} satisfies the following properties:
%	\begin{enumerate}
%		\item[P1] {\bf Limited Redundancy in Storage}: Each edge is stored in at most two workers.
%		\item[P2] {\bf No Redundancy in Computation}: Each triangle is counted by at most one worker.
%		\item[P3] {\bf No Disintegrated Triangles}: For each triangle, at least one worker receives all three edges composing the triangle, and with non-zero probability, it stores both the first and second edges when the last edge arrives.
%	\end{enumerate}
%\end{lem}
\begin{proof}
	First, we prove P1.
	Each edge $\pair$ can be stored in a worker only when case \assign happens. Since case \assign happens in at most two workers (i.e., the $f(u)$-th worker and the $f(v)$-th worker), $\pair$ can be stored in at most two workers.
	Then, we prove P2 and P3 by showing that, for each triangle, there exists exactly one worker that receives all three edges composing the triangle and with non-zero probability stores both the first and second edges when the last edge arrives.
	Consider a triangle $\triple$ and assume $\pair$ is the last edge (i.e., $\tvw<\tuv$ and $\twu<\tuv$) without loss of generality.
	If $f(u)=f(v)$ (case \lucky), none of the workers --- except the $f(u)(=f(v))$-th worker --- can satisfies the condition since $\pair$ is sent only to the $f(u)$-th worker. The $f(u)(=f(v))$-th worker also stores both $\vw$ and $\wu$ with non-zero probability (case \assign happens for both edges).
	If $f(u)\neq f(v)$ (case \unlucky), although $\pair$ is sent to every worker, none of the workers --- except the $f(w)$-th worker --- can store both $\vw$ and $\wu$ (case \noassign happens for at least one of the edges).
	The $f(w)$-th worker, however, stores $\vw$ and $\wu$ with non-zero probability (case \assign happens for both edges). 
	Therefore, in both cases, there exists exactly one worker that receives all three edges composing $\triple$ and with non-zero probability stores both $\vw$ and $\wu$ when $\uv$ arrives.
\end{proof}

%P1 is desirable for accuracy. Less redundancy in storage enables us to store more unique edges \change{from which we can estimate triangle counts more accurately.}
%P2 is desirable for speed. 
%P3 is necessary for \cocos to give (almost) exact estimates when storage is (almost) enough.
%P3 is what we aim not to compromise while reducing the redundancy in storage and computation. 
%For example, further reducing redundancy in storage by 
%storing each edge in at most one worker compromises P3 unless all edges are stored in the same worker.\footnote{\change{Consider a graph stream where the edges of a chain graph of indefinite length arrive first and then some other edges arrive. It is not known in advance which edges will arrive later, as stated in Problem~\ref{problem:trifly}. If we store each edge in at most one worker, in order to guarantee P3, we have no choice but to store all the edges of the chain graph in the same worker. Assume that two edges of the chain graph are stored in different workers. Then, there always exist three nodes $u$, $v$, and $w$ where $\uv$ and $\vw$ are stored in different workers. If $\wu$ arrives after all edges of the chain graph arrive, P3 does not hold for the triangle $\triple$.}} 

\subsubsection{Adaptive Node Mapping Function}
\label{sec:trifly:method:opt}
%\vspace{-1mm}

So far we have assumed that the function $f$, which assigns each node to a worker, is given.
We discuss how to design $f$ and propose \cocosopt, which is \cocos with our proposed function as $f$.
For each node $u\in \SV$, we use $f(u)$ to denote the worker to which $u$ is assigned.
\begin{algorithm}[t]
	\small
	%	\vspace{-0.5mm}
	\caption{Master in \cocosopt }\label{alg:cocos:opt}
	\begin{algorithmic}[1] 
		\Require (1) input graph stream: $(\eone,\etwo,...)$
		%\Statex \ \ \ \quad function assigning nodes to workers: $f$
		%\Statex \ \ \ \quad memory budget in each worker: $\budget$
		%	\Statex \ \ \ \quad set of workers: $\workerset$
		\Statex \ \ \ \quad (2) tolerance threshold for load difference: $\theta$ $(\geq 0)$ .
		\Ensure edges sent to workers %estimated global triangle count: $c$
		%\Statex	\hspace{7.5mm} estimated local triangle counts: $c_{u}$ for each node $u$
		%	\vspace{1mm}
		%		\Statex \hspace{-6mm} {\bf Master}:
		\State $\li \leftarrow 0$, $\forall i\in \workerset$ \Comment{$\li$ denotes the load of each worker $i$}
		\For{each edge $\pair$ from the source}
		\State $\istar\leftarrow \argmin_{i\in \workerset}\li$ \label{alg:opt:minload} \Comment{$\istar$ denotes the worker with the minimum load so far}
		\If{$u$ and $v$ have not been assigned to a worker by $f$} \label{alg:opt:both:start}
		\State $f(u) \leftarrow \istar$; \ \ $f(v) \leftarrow \istar$ \label{alg:opt:both:end} 
		\Comment{$f(x)$ denotes the worker to which a node $x$ is assigned}
		\ElsIf{$u$ has not been assigned to a worker by $f$} \label{alg:opt:one:start}
		\If{$l_{f(v)}\leq (1+\theta)\listar$}  \Comment{If the load difference is below the tolerance threshold $\theta$}
		\State $f(u) \leftarrow f(v)$  \Comment{Reducing redundancy is prioritized}
		\Else 
		\State $f(u) \leftarrow\istar$ \Comment{Otherwise, load balancing is prioritized}
		\EndIf
		\ElsIf{$v$ has not been assigned to a worker by $f$}
		\If{$l_{f(u)}\leq (1+\theta)\listar$} \Comment{If the load difference is below the tolerance threshold $\theta$}
		\State $f(v) \leftarrow f(u)$ \Comment{Reducing redundancy is prioritized}
		\Else 
		\State $f(v) \leftarrow\istar$ \Comment{Otherwise, load balancing is prioritized}
		\label{alg:opt:one:end}
		\EndIf
		\EndIf
		%		\EndIf
		\If{$f(u)=f(v)$}  \label{alg:opt:send:start} \Comment{{Case \lucky}} 
		\State send $\pair$ to worker $f(u)$
		\State $l_{f(u)} \leftarrow l_{f(u)} + 1$ 
		\Else  \Comment{{Case \unlucky}}
		\State send $\pair$ to every worker 
		\State $l_{f(u)} \leftarrow l_{f(u)} + 1$; \ \ $l_{f(v)} \leftarrow l_{f(v)} + 1$ \label{alg:opt:send:end}
		\EndIf
		\EndFor
		\vspace{-1mm}
	\end{algorithmic}
\end{algorithm}

\smallsection{Design Goals}:
We say an edge $\pair$ is assigned to the $i$-th worker if $f(u)=i$ or $f(v)=i$ and thus $\pair$ can possibly be stored in the $i$-th worker. In Algorithm~\ref{alg:cocos}, the load $\li$ of each $i$-th worker denotes the number of edges assigned to the worker. 
Then, two goals that a desirable $f$ function should meet are as follows:
%\vspace{-1mm}
\begin{itemize}
	\item[G1] {\bf Storage:} The redundant use of storage (i.e., the number of edges stored in multiple workers) should be minimized.
	\item[G2] {\bf Load Balancing:}
	A similar number of edges should be assigned to every worker, i.e., $\li\approx\lj$, $\forall i,j\in\workerset$.
\end{itemize}
%\vspace{-1mm}
However, achieving both goals is non-trivial because the goals compete with each other.
For example, in complete graphs, a perfect load balance and thus the second goal are achieved only when the same number of nodes are assigned to each worker. This, however, maximizes the number of edges stored in multiple workers (i.e., ${|\SV| \choose 2} -\sum_{i=1}^{\workernum} {|\SVI| \choose 2}$, where $|\SVI|$ is the number of nodes assigned to the worker $i$), conflicting with the first goal.
On the other hand, in any connected graphs, the redundant use of storage is minimized, and thus the first goal is achieved only when we assign every node to the same worker. However, this maximizes load imbalance, conflicting with the second goal. 
Moreover, due to the conditions in Section~\ref{sec:trifly:prelim:problem}, $f$ should be decided without additional passes or any prior knowledge of the input stream. In \cocos, when a new node arrives, it should be assigned to a worker without any knowledge on future edges.
%each new node is assigned to 

\smallsection{\cocosopt with Adaptive $f$.}
%With these constraints in mind, 
We propose \cocosopt, where the master, described in Algorithm~\ref{alg:cocos:opt}, adaptively decides the function $f$ based on the current load $\li$ of each worker $i \in\workerset$ so that the redundancy of storage is minimized within a specified level $\theta$ of load difference.

%Thus, the master in \cocosopt assigns each new node to a worker so that \caseassign happens as long as the assignment does not make load difference greater than the level determined by the input parameter $\theta$, which indicates tolerance for load difference.
Recall that, in \cocos, case \lucky is preferred over case \unlucky for reducing the redundancy in storage. This is because each edge $\pair$ is stored in at most one worker in case \lucky (i.e., $f(u)=f(v)$), while it is stored in at most two workers in case \unlucky (i.e., $f(u)\neq f(v)$).
Let the $\istar$-th worker be the worker with least assigned edges so far (line~\ref{alg:opt:minload}).
If an edge $\pair$ with two new nodes $u$ and $v$ arrives, the master assigns both nodes to the $\istar$-th worker (lines~\ref{alg:opt:both:start}-\ref{alg:opt:both:end}) for pursuing case \lucky and balancing loads.
If an edge $\pair$ with one new node $u$ (without loss of generality) arrives, the master assigns $u$ to the $f(v)$-th worker, for case \lucky to happen, as long as the load of the $f(v)$-th worker is not higher than $(1+\theta)$ times of the load of the $\istar$-th worker.
Otherwise, load balancing is prioritized, and $u$ is assigned to the $\istar$-th worker (lines~\ref{alg:opt:one:start}-\ref{alg:opt:one:end}).
Once $f(u)$ and $f(v)$ are determined, each edge $\pair$ is sent to the worker(s) depending on $f(u)$ and $f(v)$ as in Algorithm~\ref{alg:cocos}, and the load of the corresponding worker(s) is updated (lines~\ref{alg:opt:send:start}-\ref{alg:opt:send:end}).
Note that $f(u)$ and $f(v)$ are never changed once they are determined,
Since the assignments by $f$ are only in the master, along each edge to each worker, one bit indicating whether the edge is assigned to the worker or not should be sent to be used in line~\ref{alg:cocos:worker:case1} of Algorithm~\ref{alg:cocos}. %line~\ref{alg:cocos:worker:case1} of Algorithm~\ref{alg:method}.

\smallsection{Advantages of \cocosopt}:
%Our experiments in Section~\ref{sec:experiments} show the advantages of \cocosopt.
By co-optimizing storage and load balancing, \cocosopt stores more unique edges and thus produces more accurate estimates than \cocossimple, which is \cocos using the simple modulo function as $f$.
Although our explanation so far has focused on storage and load balancing,
\cocosopt also improves upon \cocossimple in terms of speed by increasing the chance of case \lucky, which saves not only storage but also communication and computation costs, as summarized in Table~\ref{tab:case:analysis}.

\smallsection{Potential Disadvantages of \cocosopt}:
	Different from the master with a non-adaptive node mapping function $f$ (e.g., a modulo function), the master in \cocosopt should maintain the mapping between all arriving nodes and the workers.
	Thus, the size of required space in the master can increase indefinitely.
	However, in many large-scale real-world graphs (e.g., the Friendster dataset used in Section~\ref{sec:trifly:exp}), the number of nodes is orders of magnitude smaller than that of edges.
	In addition, as described in Section~\ref{sec:trifly:method:multiple}, \cocosopt is not easily extended to multiple sources and masters.

\begin{table}[t]
	\centering
	\small 
	\caption[Advantages of Case~\lucky]{\label{tab:case:analysis} 
		\figsummary{Advantages of Case~\lucky.}
		Case \lucky saves storage, communication, and computation costs, compared to case \unlucky.%\cocos always requires less redundancy than \trifly.
	}
	\scalebox{1}{
		\begin{tabular}{l||c|c|c}
			\toprule
			Algorithms & \multicolumn{2}{c|}{\cocos (Proposed)} & \trifly \\
			\midrule
			Cases & \lucky & \unlucky  \\
			\midrule
			storage 
			(edge is stored in at most) &  $1$ worker & $2$ workers & $k$ workers \\
			communication (edge is sent to) & $1$ worker & $\workernum$ workers & $k$ workers \\
			computation (\textsc{count()} is called in) & $1$ worker & $\workernum$ workers & $k$ workers\\
			\bottomrule
		\end{tabular}
	}
\end{table}

\subsection{Lazy Aggregation}
\label{sec:trifly:method:lazy}
%\vspace{-1mm}

In the procedure \textsc{count} of Algorithm~\ref{alg:trifly}, which is commonly used by \trifly and \cocos, each worker sends the update of the local triangle count of node $w$ to the aggregator whenever it discovers each triangle $\triple$ (line~\ref{alg:trifly:worker:send:one}).
Likewise, each worker sends the updates of the global triangle count and the local triangle counts of nodes $u$ and $v$ to the aggregator whenever it processes each edge $\uv$ (line~\ref{alg:trifly:worker:send:two}).
In cases where this eager aggregation is not needed, we reduce the amount of communication by employing lazy aggregation. %between each worker and the aggregator by lazy aggregation.
Specifically, counts aggregated locally in each worker are sent to and aggregated in the aggregator (and removed from the workers)  when they are queried.
%For example, if the local triangle count of node $u$ is requested, then each worker starts to send it 

\subsection{Multiple Sources, Masters and Aggregators}
\label{sec:trifly:method:multiple}

Although our experiments in Section~\ref{sec:trifly:experiments:tradeoff} show that the performance bottlenecks of proposed algorithms are workers rather than the master, multiple masters can be considered for handling multiple sources or for fault tolerance. 
Consider the case when edges are streamed from one or more sources to multiple masters without duplication. 
By simply using the same non-adaptive node mapping function\footnote{Note that a node mapping function $f$ is non-adaptive if its mapping does not depend on any states.} $f$ (e.g., a modulo function) in every master, we can run masters independently without affecting the accuracy of \trifly or \cocos. This is because, in such cases, masters do not have any state and thus have nothing to share with each other.
The mapping function in \cocosopt (i.e., Algorithm~\ref{alg:cocos:opt}) is adaptive since its mapping depends on the loads of workers. \cocosopt is not easily extended to multiple sources and masters since all masters should share their mappings and the loads of workers.

Multiple aggregators are required when outputs (i.e., $1$ global triangle count and $|\SVT|$ local triangle counts) do not fit one machine or aggregation is a performance bottleneck.
In \trifly and \cocos, workers send key-value pairs, whose key is either `$*$' or a node id, to the aggregator (line~\ref{alg:trifly:worker:count:end} of Algorithm~\ref{alg:trifly}). The computation and storage required for aggregation are distributed across multiple aggregators if workers use the same hash function (that maps each key to an aggregator) to decide where to send each key-value pair.

\section{Theoretical Analysis}
\label{sec:trifly:analysis}
We theoretically analyze the accuracy, time complexity, and space complexity of \cocos and \trifly.
Then, based on the results, we provide a guide to setting the parameters of \cocos and \trifly.

\subsection{Accuracy Analysis}
\label{sec:trifly:analysis:accuracy}

We analyze the biases and variances of the estimates given by \cocos and \trifly.
The biases and variances determine the estimation error of the algorithms.
We first prove that both \cocos and \trifly give estimates with no bias.
Then, we analyze the variances of the estimates to give an intuition why \cocos is more accurate than \trifly.

\subsubsection{Bias Analysis}
\label{sec:trifly:analysis:accuracy:bias}

We prove the unbiasedness of \trifly and \cocos. That is, we show that \trifly and \cocos give estimates whose expected values are equal to the true triangle counts.
For proofs, consider $\SGT=(\SVT,\SET)$, which is the graph consisting of the edges arriving at time $t$ or earlier. %as in Section~\ref{sec:trifly:prelim:notation}.
We define $\cbart$ as $\cbar$ in the aggregator after edge $\et$ is processed.
Then, $\cbart$ is an estimate of $\globalnum$, the count of global triangles in $\SGT$.
Likewise, for each node $u\in\SVT$, we define $\ctu$ as $\cu$ in the aggregator after  $\et$ is processed. 
Then, each $\ctu$ is an estimate of $\localnum$, the count of local triangles of $u$ in $\SGT$.

\begin{thm}[Unbiasedness of \trifly\label{thm:trifly:bias:naive}]
	At any time, the expected values of the estimates given by \trifly are equal to the true global and local triangle counts. That is, in Algorithm~\ref{alg:trifly}, 
	\begin{align*}
	& \BE[\cbart]=\globalnum, & \ \forall t\in\ints. \\
	& \BE[\ctu]=\localnum, & \ \forall u\in\SVT, \ \forall t\in\ints.
	\end{align*}
\end{thm}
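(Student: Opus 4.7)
I would prove this triangle-by-triangle via linearity of expectation. Fix a time $t$ and a triangle $\triple \in \STT$, and without loss of generality suppose $\uv$ is the last of its three edges to arrive, so $\tvw < \tuv$ and $\twu < \tuv$. For each worker $i \in \workerset$, let $X_i$ be the indicator of the event that both $\vw$ and $\wu$ lie in $\SEI$ at the instant worker $i$ begins executing \textsc{count}($\uv$). Then the only messages that worker $i$ can send on behalf of this particular triangle come from that call, and the total magnitude of those messages is $1/\piuvw$ if $X_i=1$ and $0$ otherwise. After the aggregator averages across $\workernum$ workers (line~\ref{alg:trifly:aggregator:increase:one} of Algorithm~\ref{alg:trifly}), the expected contribution of this triangle to $\cbart$ equals $\frac{1}{\workernum}\sum_{i=1}^{\workernum}\BE[X_i]/\piuvw$.

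The crux is to verify $\BE[X_i]=\piuvw$. This reduces to the standard second-order inclusion property of reservoir sampling: after worker $i$ has processed $\li$ edges, $\SEI$ is a uniformly random $\min(\budget,\li)$-subset of those edges, so any two fixed distinct edges among them are jointly contained in $\SEI$ with probability $\min(1,\budget(\budget-1)/(\li(\li-1)))$. Because \textsc{count} is called before \textsc{sample} for the current edge $\uv$, both $\vw$ and $\wu$ are among the $\li$ previously processed edges, which matches exactly the definition of $\piuvw$ in Eq.~(\ref{eq:trifly:prob:discover}). Hence the expected contribution of each triangle to $\cbart$ is $1$, and summing over $\STT$ via linearity of expectation yields $\BE[\cbart]=\globalnum$.

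For the local counts the argument is the same. In \textsc{count}, for every discovered triangle $\triple$ the worker issues a symmetric update of magnitude $1/\piuvw$ to each of $u$, $v$, and $w$ (lines~\ref{alg:trifly:worker:send:one}--\ref{alg:trifly:worker:send:two}). Thus, for any node $u \in \SVT$ and any triangle in $\STTu$, the expected contribution to $\ctu$ is again $1$, and summing over $\STTu$ gives $\BE[\ctu]=\localnum$.

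The main obstacle I expect is a clean derivation of the second-order inclusion probability of reservoir sampling, with care for the boundary regime $\li<\budget$ (in which $\piuvw=1$ and both supporting edges are deterministically in $\SEI$, so $X_i=1$ almost surely). A short auxiliary lemma formalizing the pairwise inclusion probability at every arrival time should suffice; once it is in hand, the theorem reduces to bookkeeping the contributions triangle by triangle.
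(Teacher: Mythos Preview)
Your argument is correct, but it takes a different route from the paper's own proof of this theorem. The paper simply observes that each worker independently runs \triestimp, cites the unbiasedness of \triestimp (Theorem~4.12 of \cite{stefani2017triest}) to conclude $\BE[\cbarit]=\globalnum$ for every worker $i$, and then averages: $\BE[\cbart]=\frac{1}{\workernum}\sum_{i}\BE[\cbarit]=\globalnum$. You instead reprove the per-worker unbiasedness from scratch, triangle by triangle, via the pairwise inclusion probability of reservoir sampling. Your approach is more self-contained (no black-box citation) and, notably, it is essentially the same decomposition the paper uses later for Theorem~\ref{thm:trifly:bias:method} on \cocos, where no off-the-shelf result is available. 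The trade-off is length: the paper's proof is two lines, while yours requires the auxiliary reservoir-sampling lemma you mention. Either is fine; just be aware that the paper chose the shortcut here and reserved the explicit triangle-by-triangle bookkeeping for \cocos.
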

\begin{proof}
	The unbiasedness of \trifly follows from that of \triestimp \cite{stefani2017triest}, which each worker in \trifly runs independently.
	Let $\cbarit$ be the global triangle count sent from each worker $i$ by time $t$. By line~\ref{alg:trifly:aggregator:increase:one} of Algorithm~\ref{alg:trifly}, $\cbart=\sum_{i=1}^{\workernum}\cbarit/\workernum$.
	From $\BE[\cbarit]=\globalnum$ (Theorem~4.12 of \cite{stefani2017triest}),
	$$\BE[\cbart]=\sum\nolimits_{i=1}^{\workernum}{\BE[\cbarit]}/{\workernum}=\globalnum.$$
	Likewise, for each node $u\in\SVT$, let $\ctiu$ be the local triangle count of $u$ sent from each worker $i$ by time $t$.
	By line~\ref{alg:trifly:aggregator:increase:two} of Algorithm~\ref{alg:trifly},  $\ctu=\sum_{i=1}^{\workernum}\ctiu/\workernum$.
	From $\BE[\ctiu]=\localnum$ (Theorem~4.12 of \cite{stefani2017triest}), 
	$$\BE[\ctu]=\sum\nolimits_{i=1}^{\workernum}{\BE[\ctiu]}/{\workernum}=\localnum.\eqno$$
\end{proof}
\begin{thm}[Unbiasedness of \cocos\label{thm:trifly:bias:method}]  
	At any time, the expected values of the estimates given by \cocos are equal to the true global and local triangle counts. That is, in Algorithm~\ref{alg:cocos}, 
	\begin{align*}
	& \BE[\cbart]=\globalnum, & \ \forall t\in\ints. \\
	& \BE[\ctu]=\localnum, & \ \forall u\in\SVT, \ \forall t\in\ints.
	\end{align*}
\end{thm}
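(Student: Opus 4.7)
The plan is to mirror the proof of Theorem~\ref{thm:trifly:bias:naive}, but with the crucial difference that in \cocos each triangle is counted by exactly one worker (not averaged over $\workernum$ workers), which matches the fact that the aggregator in Algorithm~\ref{alg:cocos} no longer divides by $\workernum$ (lines~\ref{alg:cocos:aggregator:increase:one}--\ref{alg:cocos:aggregator:increase:two}). I will work triangle-by-triangle and use linearity of expectation.

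First I would fix a triangle $\triple \in \STT$ and, without loss of generality, assume $\pair$ is its last-arriving edge, so that $\triple$ is discoverable only when $\pair$ is processed (using $\vw$ and $\wu$ already stored). By Lemma~\ref{lemma:trifly:property} (properties P2 and P3), there exists exactly one worker, call it $\istar$, that counts $\triple$ with non-zero probability: $\istar = f(u)$ if $f(u)=f(v)$ (Case \lucky), and $\istar = f(w)$ if $f(u)\neq f(v)$ (Case \unlucky). In both cases, by the construction of \cocos, Case \assign holds for both $\vw$ and $\wu$ at worker $\istar$, so these two edges are actually considered for sampling there (incrementing $l_{\istar}$). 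Moreover, $\pair$ is delivered to $\istar$ (in Case \lucky directly; in Case \unlucky by broadcast), so that \textsc{count} is invoked at $\istar$ when $\pair$ arrives.

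Next I would invoke the reservoir-sampling property of the procedure \textsc{sample} at worker $\istar$: conditioning on the value of $l_{\istar}$ just before $\pair$ is received, the set $\SE_{\istar}$ is a uniform random subset of size $\min(\budget,l_{\istar})$ of the $l_{\istar}$ edges assigned to worker $\istar$ so far. Consequently, the probability that both $\vw$ and $\wu$ are simultaneously in $\SE_{\istar}$ at that moment equals $\min(1,\budget(\budget-1)/(l_{\istar}(l_{\istar}-1)))$, which is precisely $p_{\istar}[uvw]$ from Eq.~\eqref{eq:trifly:prob:discover}. When this event occurs, worker $\istar$ increments the global count and the three local counts by $1/p_{\istar}[uvw]$, and the aggregator applies these updates without averaging. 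Let $X_{uvw}$ denote the indicator of discovery; then the contribution of $\triple$ to $\cbart$ equals $X_{uvw}/p_{\istar}[uvw]$, whose conditional expectation (given $l_{\istar}$, hence $p_{\istar}[uvw]$) is exactly $1$. Taking the tower expectation and summing over all $\triple\in\STT$ by linearity gives $\BE[\cbart]=\globalnum$; the identical argument restricted to triangles incident on a fixed node $u$ gives $\BE[\ctu]=\localnum$.

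The only subtle step—where I expect the most care is needed—is justifying that the reservoir-sampling formula applies unchanged to the per-worker edge stream. Unlike \trifly, worker $\istar$ in \cocos sees only the subset of edges assigned to it, so $l_{\istar}$ counts edges in this filtered substream rather than global time. What saves us is that Case \assign for $\vw$, $\wu$, and $\pair$ all hold at $\istar$, so all three edges appear in $\istar$'s substream and participate in the reservoir in their original relative order; the textbook reservoir-sampling analysis then applies verbatim to yield the claimed joint-inclusion probability, and the rest of the proof reduces to the same expectation calculation used for \triestimp.
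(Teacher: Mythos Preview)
Your proof is correct and follows essentially the same route as the paper's: fix a triangle, use Lemma~\ref{lemma:trifly:property} to identify the unique worker $\istar$ (the paper calls it $\fuvw$), observe that this worker's contribution is $1/p_{\istar}[uvw]$ times the indicator that both earlier edges survive in its reservoir, conclude the contribution has expectation $1$, and sum by linearity. One small inaccuracy in your final paragraph: in Case~\unlucky with $\istar=f(w)$, the last edge $\pair$ need \emph{not} satisfy Case~\assign at $\istar$ (it may not enter the reservoir there at all), but this is harmless since $\pair$ only needs to be delivered so that \textsc{count} runs---the reservoir argument requires Case~\assign only for $\vw$ and $\wu$, which you already established.
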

%\noindent\textit{Proof.}
%See Appendix~\ref{sec:trifly:appendix:bias}.

\begin{proof}
Consider a triangle $\triple\in \STT$ and assume without loss of generality that $\tvw < \twu < \tuv\leq t$.
By Lemma~\ref{lemma:trifly:property}, there is exactly one worker that can count $\triple$. Let $\fuvw\in\workerset$ denote the worker. 
Let $\diuvw$ be the contribution of $\triple$ to each of $\cbart$, $\ctu$, $\ctv$, and $\ctw$ by each $i$-th worker.
Then, $\diuvw=0$ if $i\neq \fuvw$.
If we let $\SEfuvw^{(\tuv)}$ be the set of edges stored in the $\fuvw$-th worker when $\uv$ arrives, then by lines~\ref{alg:trifly:worker:send:one}-\ref{alg:trifly:worker:send:two} of Algorithm~\ref{alg:trifly} and lines~\ref{alg:cocos:aggregator:increase:one}-\ref{alg:cocos:aggregator:increase:two} of Algorithm~\ref{alg:cocos},
	\begin{equation*}
	\dfuvwuvw = 
	\begin{cases}
	1/(\pfuvwuvw) & \mbox{if } \vw, \wu \in \SEfuvw^{(\tuv)} \\
	0 & \mbox{ otherwise.}
	\end{cases}
	\end{equation*}
By definition, $\pfuvwuvw$ is the probability that both $\vw$ and $\wu$ are in $\SEfuvw^{(\tuv)}$. Therefore, $\BE[\dfuvwuvw] = 1$. 
By linearity of expectation, the following equations hold:
\begin{align*}
	\BE[\cbart]  & = \BE \bigg[\sum_{i=1}^{\workernum} \sum_{\triple \in \STT} \diuvw \bigg]
	= \sum_{i=1}^{\workernum}\sum_{\triple \in \STT} \BE[\diuvw] \\ & 
	= \sum_{\triple \in \STT} \BE[\dfuvwuvw] = \sum_{{\triple \in \STT}}1 = \globalnum, \qquad \forall t\in\ints.
\end{align*}
\begin{align*}
	\BE[\ctu] & = \BE \bigg[\sum_{i=1}^{\workernum} \sum_{\ \ \ \triple \in \STTu} \diuvw \bigg] = \sum_{i=1}^{\workernum}\sum_{\triple \in \STTu} \BE[\diuvw] \\
	& = \sum_{\triple \in \STTu} \BE[\dfuvwuvw] = \sum_{\triple \in \STTu} 1 \\
	&  = \localnum, \qquad \forall t\in\ints,\forall u\in \SVT.
\end{align*}
Hence, the estimates given by Algorithm~\ref{alg:cocos} are unbiased.
\end{proof}

\subsubsection{Variance Analysis}
\label{sec:trifly:analysis:accuracy:variance}

Having shown that the estimate $\cbart$ is an unbiased estimate of the global triangle count $\globalnum$, we analyze its variance in \trifly and \cocos to give an intuition why the variance is smaller in \cocos than in \trifly.
The variance of each $\ctu$ can be analyzed in the same manner considering only the local triangles with node $u$.
We first define the two types of triangle pairs illustrated in Figure~\ref{fig:trifly:pair:defn}.

\begin{defn}[Type~1 Triangle Pair]
	A Type~1 triangle pair is two different triangles $\triple$ and $\tripletwo$ sharing an edge $\uv$ satisfying $\twu=\max(\tuv,\tvw,\twu)$ and $t_{xu}=\max(\tuv,t_{vx},t_{xu})$.
\end{defn}
\begin{defn}[Type~2 Triangle Pair]
	 A Type~2 triangle pair is two different triangles $\triple$ and $\tripletwo$ sharing an edge $\uv$ satisfying $\tvw=\max(\tuv,\tvw,\twu)$ and $t_{xu}=\max(\tuv,t_{vx},t_{xu})$.
\end{defn}

\begin{figure}[t]
	\centering
	\includegraphics[width=0.7\linewidth]{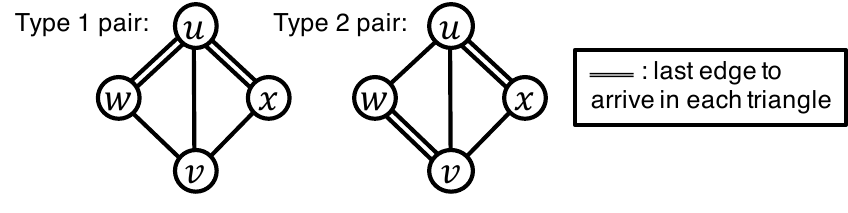}
	\caption[Illustrations of Type~1 and Type~2 triangle pairs]{\label{fig:trifly:pair:defn} \figsummary{Illustrations of Type~1 and Type~2 triangle pairs.}} 
\end{figure}

Let $\pt$ and $\qt$ be the numbers of Type~1 pairs and Type~2 pairs, respectively, in $\SGT$, which is the graph composed of the edges arriving at time $t$ or earlier.
Then, we define $\zt$ as 
$$\zt:=\max\left(0, \globalnum\left(\frac{(t-1)(t-2)}{\budget(\budget-1)}-1\right) + \left(\pt+\qt\right)\frac{t-1-\budget}{\budget}\right), $$
Our analysis in this section is largely based on Lemma~\ref{lemma:trifly:var:naive:one}, where $\zt$ upper bounds the variance of the estimate $\cbart$ in \triestimp, which is equivalent to \trifly and \cocos with a single worker.
Notice that $\zt$ decreases as the storage budget (i.e., $\budget$) increases, while $\zt$ increases as the numbers of edges (i.e., $t$), triangles (i.e., $\globalnum$), and Type~1 or 2 triangle pairs (i.e., $\pt$ and $\qt$) increase.

\begin{lem}[Variance of \triestimp \cite{stefani2017triest}]  \label{lemma:trifly:var:naive:one}
	Assume that a single worker is used (i.e., $\workernum=1$) in Algorithm~\ref{alg:trifly} or Algorithm~\ref{alg:cocos}. At any time $t$, the variance of the estimate $\cbart$ of the global triangle count $\globalnum$ is upper bounded by $\zt$. That is,
	$$Var[\cbart] \leq \zt,\ \forall t\in\ints.$$
\end{lem}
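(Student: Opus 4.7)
The plan is to expand $\text{Var}[\cbart]$ using the per-triangle decomposition already used in the unbiasedness proof and to bound the diagonal and off-diagonal terms separately.

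By the proof of Theorem~\ref{thm:trifly:bias:method} (specialized to $\workernum=1$), we may write $\cbart=\sum_{\triple\in\STT}Y_{uvw}$, where $Y_{uvw}$ is the contribution of triangle $\triple$: assuming WLOG that $\uv$ is the latest of its three edges, $Y_{uvw}=1/p_{uvw}$ if both $\vw$ and $\wu$ lie in the reservoir immediately before $\uv$ is processed, and $Y_{uvw}=0$ otherwise. Since $\BE[Y_{uvw}]=1$, the variance expands as
$$\text{Var}[\cbart]=\sum_{\triple\in\STT}\text{Var}[Y_{uvw}]+\sum_{\triple\neq\tripletwo}\text{Cov}(Y_{uvw},Y_{uvx}),$$
and it suffices to bound each piece by a portion of $\zt$.

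The diagonal term is straightforward. When $t-1\leq\budget$ the reservoir is lossless, $p_{uvw}=1$, and $\text{Var}[Y_{uvw}]=0$. When $t-1>\budget$, we have $\text{Var}[Y_{uvw}]=\BE[Y_{uvw}^{2}]-1=1/p_{uvw}-1$, and using $p_{uvw}\geq \budget(\budget-1)/((t-1)(t-2))$ and summing over the $\globalnum$ triangles contributes the first summand of $\zt$. For the off-diagonal term, two distinct triangles share at most one edge. Pairs sharing no edge have $\text{Cov}(Y_{uvw},Y_{uvx})\leq 0$ by the negative association property of reservoir sampling applied to disjoint edge sets, so these terms can be dropped. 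Edge-sharing pairs $(\triple,\tripletwo)$ with shared edge $\pair$ split naturally according to the relative order of $\pair$: the Type~1 case ($\pair$ is the earliest edge in both triangles) and the Type~2 case ($\pair$ is the middle edge in both), which is exactly the classification in Figure~\ref{fig:trifly:pair:defn}. In each case we compute the joint probability that the four non-shared edges all lie in the reservoir at the two triangle-closing moments, compare it to the product $p_{uvw}p_{uvx}$, and thereby bound each covariance by a term of order $(t-1-\budget)/\budget$; summing over all $\pt$ Type~1 and $\qt$ Type~2 pairs yields the second summand of $\zt$.

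The main obstacle is the Type~1 covariance. There, the shared edge $\pair$ arrives first and each triangle is closed at a different later time, so one must track reservoir membership at two separate instants and carefully condition on $\pair$ surviving in the reservoir up to each of them. I expect to use the tower property together with the exchangeability of reservoir samples to reduce the joint calculation to products of hypergeometric-style selection probabilities; the resulting algebra is what produces the specific factor $(t-1-\budget)/\budget$. The Type~2 case is similar but easier because the shared edge is processed after the first closing edge has already been sampled, so only one reservoir configuration is relevant.
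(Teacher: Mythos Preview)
A preliminary note: the paper does not prove this lemma at all; it is quoted from Theorem~4.13 of \cite{stefani2017triest} and used as a black box in the proofs of Theorems~\ref{thm:trifly:var:naive} and~\ref{thm:trifly:var:method}. So there is no in-paper argument to compare against, and your proposal is really an attempt to reconstruct the original \triestimp analysis. The high-level strategy you describe---diagonal/off-diagonal split, drop the non-positive covariances, compute the remaining ones explicitly---is indeed the right one.

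The execution of the off-diagonal part, however, has two concrete errors. First, your reading of Type~1 and Type~2 is wrong. By the paper's definitions, a Type~1 pair has last-arriving edges $\wu$ and $\{x,u\}$ (both incident to $u$), while a Type~2 pair has last-arriving edges $\vw$ and $\{x,u\}$ (one through $v$, one through $u$). In \emph{both} types the shared edge $\uv$ is simply not the last edge of either triangle; whether it is first or second is irrelevant to the classification. Your dichotomy ``$\uv$ earliest in both'' versus ``$\uv$ middle in both'' does not match these definitions and is not exhaustive: you have silently dropped every configuration in which $\uv$ is the last edge of one or both triangles. Those configurations do have non-positive covariance, but that requires its own argument (it is here, not only for edge-disjoint pairs, that the negative-dependence reasoning is actually needed).

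Second, the edges you say must be in the reservoir are wrong. In a Type~1 pair the closing edges are $\wu$ and $\{x,u\}$; the indicator for $\triple$ asks that $\uv$ and $\vw$ lie in the reservoir at time $\twu$, and the indicator for $\tripletwo$ asks that $\uv$ and $\{v,x\}$ lie in the reservoir at time $t_{xu}$. Thus the \emph{shared} edge $\uv$ is required by both events, and this overlap is precisely what generates the positive covariance and the factor $(t-1-\budget)/\budget$. Your plan to ``compute the joint probability that the four non-shared edges all lie in the reservoir'' misidentifies which edges are stored (the closing edges $\wu,\{x,u\}$ are arriving, not sampled) and omits the shared edge, so the covariance computation you sketch would not reach the stated bound.
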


The upper bound of the variance of the estimate $\cbart$ in \trifly decreases proportionally to the number of workers, as formalized in Theorem~\ref{thm:trifly:var:naive}.
	This follows from the fact that $\cbart$ in \trifly is the simple average of $\workernum$ estimates obtained by running $\triestimp$ independently in $\workernum$ workers.

\begin{thm}[Variance of \trifly]  \label{thm:trifly:var:naive}
	In Algorithm~\ref{alg:trifly}, the upper bound of the variance of the estimate $\cbart$, given in Lemma~\ref{lemma:trifly:var:naive:one}, decreases proportionally to the number of workers $\workernum$. That is,
	\begin{equation}
		Var[\cbart] \leq \zt/\workernum,\ \forall t\in\ints. \label{eq:trifly:var:naive}
	\end{equation}
\end{thm}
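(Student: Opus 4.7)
The plan is to exploit the fact that \trifly (Algorithm~\ref{alg:trifly}) runs $\workernum$ independent copies of \triestimp, one per worker, each seeded with a distinct random source, and then averages their estimates in the aggregator. This independence is the key structural property that drives the $1/\workernum$ factor, and the proof amounts to standard variance reduction by averaging, combined with the single-worker bound already stated in Lemma~\ref{lemma:trifly:var:naive:one}.

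Concretely, I would let $\cbarit$ denote the contribution to the global triangle count sent from worker $i$ by time $t$, so that by line~\ref{alg:trifly:aggregator:increase:one} of Algorithm~\ref{alg:trifly} we have $\cbart = \frac{1}{\workernum}\sum_{i=1}^{\workernum}\cbarit$. First I would observe that each $\cbarit$ is exactly the estimate produced by a stand-alone \triestimp running in the $i$-th worker, since the master broadcasts every edge to every worker and each worker performs sampling and counting with its own independent randomness. Therefore Lemma~\ref{lemma:trifly:var:naive:one}, which bounds the variance of \triestimp by $\zt$, applies to each $\cbarit$ individually: $Var[\cbarit] \leq \zt$ for every $i\in\workerset$.

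Next I would invoke mutual independence of $\{\cbarit\}_{i=1}^{\workernum}$, which holds because each worker draws its Bernoulli coin flips and uniform replacements from an independent random source and never shares state with other workers (the shared-nothing condition~C2 of Problem~\ref{problem:trifly}). Independence gives
\begin{equation*}
Var[\cbart] = \frac{1}{\workernum^{2}} \sum_{i=1}^{\workernum} Var[\cbarit] \leq \frac{1}{\workernum^{2}} \cdot \workernum \cdot \zt = \frac{\zt}{\workernum},
\end{equation*}
which is exactly Eq.~\eqref{eq:trifly:var:naive}.

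I do not expect any real obstacle here: the only substantive ingredient is Lemma~\ref{lemma:trifly:var:naive:one}, which we are allowed to assume, and the rest is the textbook identity for the variance of an average of independent random variables. The one point worth stating carefully is the independence of the $\cbarit$'s, since a reader might wonder whether the shared input stream introduces correlation; it does not, because given the deterministic input $\gstream$, $\cbarit$ is a function solely of the random seed of worker $i$, and these seeds are independent across workers.
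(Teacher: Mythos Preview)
Your proposal is correct and essentially identical to the paper's proof: both define $\cbarit$, write $\cbart=\frac{1}{\workernum}\sum_i\cbarit$, invoke independence of the workers' estimates, and apply the single-worker variance bound (Lemma~\ref{lemma:trifly:var:naive:one}, which is Theorem~4.13 of \cite{stefani2017triest}) to obtain $Var[\cbart]\leq \zt/\workernum$. Your added remarks justifying independence via separate random seeds and the shared-nothing condition are a nice touch but do not change the argument.
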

\begin{proof}
Let $\cbarit$ be the global triangle count sent from each worker $i$ by time $t$. Then, by line~\ref{alg:trifly:aggregator:increase:one} of Algorithm~\ref{alg:trifly}, $\cbart=\sum_{i=1}^{\workernum}\cbarit/\workernum$.
Since $\cbarit$ of each worker $i \in \workerset$ is independent from that of the other workers, 
\begin{equation*}
	Var[\cbart] = \sum\nolimits_{i=1}^{k} Var[\cbarit / \workernum] = \sum\nolimits_{i=1}^{k} Var[\cbarit] / \workernum^2 \leq \workernum\cdot \zt/\workernum^2 = \zt/\workernum,
\end{equation*}
where the inequality follows from Theorem~4.13 in \cite{stefani2017triest}, which states that $Var[\cbarit] \leq \zt$ for each worker $i\in\workerset$.
\end{proof}

The variance of the estimate $\cbart$ in \cocos depends on how the triangles in $\STT$ are distributed across workers.
By Lemma~\ref{lemma:trifly:property}, there is exactly one worker that can count each triangle.
Thus, for each $i\in\workerset$, let $\STTI\subseteq \STT$ be the set of triangles that can be counted by the $i$-th worker.
Likewise, let $\pti$ and $\qti$ be the numbers of Type~1 pairs and Type~2 pairs, respectively, among the triangles in $\STTI$.
Then, for each $i$-th worker, we define $\zti$ as
\begin{equation*}
	\zti:=\max\left(0,\globalnumi \Big(\frac{(\lti-1)(\lti-2)}{\budget(\budget-1)}-1\Big) + \big(\pti+\qti\big)\frac{\lti-1-\budget}{\budget}\right),
\end{equation*}
where $\lti$ is the load $\li$ of each $i$-th worker when $\et$ arrives.
This term is used to upper bound  the variance of $\cbart$ in
Theorem~\ref{thm:trifly:var:method}.
According to the theorem, each worker's contribution to the variance decreases as the storage budget $\budget$ increases, while the contribution increases as more edges, triangles, and Type~1 or 2 triangle pairs (whose discovering probabilities are positively correlated) are assigned to the worker, which matches our intuition.

\begin{thm}[Variance of \cocos] \label{thm:trifly:var:method}
	At any time $t$, the variance of the estimate $\cbart$ of the global triangle count $\globalnum$ in Algorithm~\ref{alg:cocos} with a deterministic node mapping function $f$ is upper bounded by the sum of $\zti$ over all workers. That is
	\begin{equation}
	Var[\cbart] \leq \sum\nolimits_{i=1}^{\workernum}\zti,\ \forall t\in\ints \label{eq:trifly:var:method}
	\end{equation}
	%And, for any $t$ such that $\max_{i=1}^{\workernum}\lti > k+1$,
	%$$ Var[\cbart]\leq \sum_{i=1}^{\workernum}\max(0,\zti),$$
\end{thm}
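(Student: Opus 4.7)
The plan is to mirror the variance proof for \trifly (Theorem~\ref{thm:trifly:var:naive}) by decomposing $\cbart$ into independent per-worker contributions and then invoking Lemma~\ref{lemma:trifly:var:naive:one} separately on each worker with its own local parameters $\lti$, $\globalnumi$, $\pti$, and $\qti$.

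First I would fix the (deterministic) assignment function $f$ and let $\cbarit$ denote the total update to the global count sent from worker $i$ to the aggregator by time $t$. Unlike Algorithm~\ref{alg:trifly}, the aggregator in Algorithm~\ref{alg:cocos} (lines~\ref{alg:cocos:aggregator:increase:one}--\ref{alg:cocos:aggregator:increase:two}) simply adds $\delta$ without dividing by $\workernum$, so $\cbart=\sum_{i=1}^{\workernum}\cbarit$. Because Lemma~\ref{lemma:trifly:property}(P2) guarantees that each triangle is counted by at most one worker, this decomposition is a genuine partition of the estimator across the workers and no triangle is double-counted.

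Next I would establish independence. Each worker uses its own randomness in the reservoir-sampling step, and whether an edge is delivered to worker $i$ depends only on the deterministic function $f$ and the input stream. Hence the families of random bits driving the $\workernum$ workers are mutually independent, which gives
\begin{equation*}
Var[\cbart]\;=\;\sum_{i=1}^{\workernum} Var[\cbarit].
\end{equation*}
The core step is then to bound $Var[\cbarit]$ for each worker by $\zti$. The key observation is that worker $i$'s execution is equivalent to running \triestimp on a sub-stream of length $\lti$: it receives exactly the edges assigned to it (together with some broadcast edges used only in \textsc{count}), its \textsc{sample} routine performs standard reservoir sampling of budget $\budget$ on those $\lti$ edges, and by Lemma~\ref{lemma:trifly:property}(P3) every triangle in $\STTI$ has both of its two earlier edges assigned to worker $i$, so the triangle can indeed be discovered when its last edge arrives. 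Moreover, the correction factor $\pfuvwuvw$ in Eq.~\eqref{eq:trifly:prob:discover} is already expressed in terms of the worker-local counter $\lti$ rather than the global time $t$, so the unbiased-estimator mechanics inside worker $i$ coincide exactly with those of single-machine \triestimp applied to a stream containing $\globalnumi$ triangles, $\pti$ Type~1 pairs, and $\qti$ Type~2 pairs. Applying Lemma~\ref{lemma:trifly:var:naive:one} with these local parameters yields $Var[\cbarit]\leq\zti$, and summing over $i$ gives Eq.~\eqref{eq:trifly:var:method}.

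The main obstacle will be making the reduction to \triestimp fully rigorous. In particular I have to check that (i) the joint distribution of worker $i$'s reservoir across time agrees with that of \triestimp on the sub-stream of assigned edges, undisturbed by the broadcast edges that pass through \textsc{count} but skip \textsc{sample} in Case~\noassign; (ii) the set of triangles that worker $i$ can possibly discover is exactly $\STTI$, with the correct identification of which incident edge plays the role of the ``last'' edge (a point where the case analysis in the proof of Lemma~\ref{lemma:trifly:property} has to be reused to decide whether the triangle is certified under Case~\lucky or Case~\unlucky); and (iii) the $\max(0,\cdot)$ clipping in the definition of $\zti$ correctly absorbs the degenerate cases $\lti\leq\budget$ where worker $i$ stores every assigned edge losslessly and contributes zero variance. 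Once these bookkeeping items are settled, Lemma~\ref{lemma:trifly:var:naive:one} closes the argument term by term.
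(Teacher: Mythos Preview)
Your proposal is correct and follows the same approach as the paper's sketch: decompose $\cbart=\sum_{i}\cbarit$, use the workers' independent randomness to split the variance, and bound each $Var[\cbarit]$ by $\zti$ via the \triestimp variance bound with worker-local parameters. The paper likewise frames this last step as a \emph{generalization} of Theorem~4.13 in \cite{stefani2017triest} rather than a literal application, which matches your obstacles~(i)--(ii) about broadcast edges in Case~\unlucky that pass through \textsc{count} without entering the reservoir.
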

\begin{sproof}
Let $\cbarit$ be the global triangle count sent from each $i$-th worker to the corresponding aggregator by time $t$. Then, by line~\ref{alg:cocos:aggregator:increase:one}, $\cbart = \sum_{i=1}^{\workernum} \cbarit$. 
Since $f$ is assumed to be deterministic and the sampling processes of different workers are independent, $\cbarit$ of each $i$-th worker is uncorrelated with that of the other workers. Thus, Eq.~\eqref{eq:trifly:method:var:proof:one} holds. 
\begin{equation}
	Var[\cbart] = \sum\nolimits_{i=1}^{\workernum} Var[\cbarit]. \label{eq:trifly:method:var:proof:one}
\end{equation}
Then, Theorem~4.13 in \cite{stefani2017triest} is generalized for each $\cbarit$ to $Var[\cbarit]$ $\leq \zti.$
This generalization and Eq.~\eqref{eq:trifly:method:var:proof:one} imply Eq.~\eqref{eq:trifly:var:method}.
\end{sproof}
Note that the adaptive mapping function used in \cocosopt is also deterministic if we break ties in a deterministic way. Given an input graph stream, it always gives the same mapping.

We compare the variance of $\cbart$ in \cocos (i.e., Eq.~\eqref{eq:trifly:var:method}) with that in \trifly (i.e., Eq.~\eqref{eq:trifly:var:naive}). To this end, we say a node mapping function $f:\SV \rightarrow \{1,...,k\}$ is {\it $\epsilon$-uniform} if it satisfies the following conditions for all $i\in\{1,...,k\}$:
\begin{align*}
\globalnumi \leq (1+\epsilon)\cdot \frac{\globalnum}{\workernum}, \ \ \lti \leq (1+\epsilon)\cdot\frac{t}{\workernum},
 \ \ \pti \leq (1+\epsilon)\cdot\frac{\pt}{\workernum},  \ \
\qti \leq (1+\epsilon)\cdot\frac{\qt}{\workernum^2}. 
\end{align*}
Note that $\sum_{i=1}^{k}\qti$ can be strictly small than $\qt$.\footnote{A Type~2 triangle pair is not assigned to any worker if the two triangles are assigned to different workers.} % and thus the last condition can hold for small $\epsilon$.

If \cocos is equipped with an $\epsilon$-uniform mapping function, for each $\zti$ in Eq.~\eqref{eq:trifly:var:method},
\begin{align*}
\zti &
\leq \globalnumi \cdot \frac{(\lti)^2}{\budget(\budget-1)} + \pti \cdot \frac{\lti}{\budget} + \qti \cdot \frac{\lti}{\budget} \\
& \leq  \frac{(1+\epsilon)^{3}}{\workernum^3} \cdot \globalnum \cdot \frac{t^2}{\budget(\budget-1)} + \frac{(1+\epsilon)^2}{\workernum^2}\cdot\pt\cdot \frac{t}{\budget} + \frac{(1+\epsilon)^2}{\workernum^3}\cdot\qt\cdot \frac{t}{\budget} \\
& = O\left(\frac{(1+\epsilon)^{3}\globalnum t^{2}}{\workernum^{3}\budget^{2}} + \frac{(1+\epsilon)^{2}\pt t}{\workernum^{2}\budget} + \frac{(1+\epsilon)^{2} \qt t}{\workernum^{3}\budget} \right). 
\end{align*}
Then, by Theorem~\ref{thm:trifly:var:method}, the variance of the estimate $\cbart$ in \cocos is
\begin{equation}
Var[\cbart]
%& \leq \sum\nolimits_{i=1}^{\workernum} \BE_{f}[\zti] \nonumber \\ & 
= O\left(\frac{(1+\epsilon)^{3}\globalnum t^{2}}{\workernum^{2}\budget^{2}} + \frac{(1+\epsilon)^{2}\pt t}{\workernum \budget} + \frac{(1+\epsilon)^{2}\qt t}{\workernum^{2} \budget} \right). \label{var:approx:method}
\end{equation}
If an $O(1)$-uniform node mapping function is used, then, the variance of the estimate $\cbart$ in\cocos becomes
\begin{equation}
Var[\cbart]
%& \leq \sum\nolimits_{i=1}^{\workernum} \BE_{f}[\zti] \nonumber \\ & 
= O\left(\frac{\globalnum t^{2}}{\workernum^{2}\budget^{2}} + \frac{\pt t}{\workernum \budget} + \frac{\qt t}{\workernum^{2} \budget} \right). \label{var:approx:method:constant}
\end{equation}

On the other hand, by Theorem~\ref{thm:trifly:var:naive}, the variance of the estimate in \trifly is
\begin{equation} 
Var[\cbart] \leq \globalnum \cdot \frac{t^2}{\budget(\budget-1)} + \left(\pt+\qt\right)\frac{t}{\budget}   
= O\left(\frac{\globalnum t^{2}}{\workernum \budget^{2}} + \frac{\pt t}{\workernum \budget} + \frac{\qt t}{\workernum \budget} \right). \label{var:approx:naive} 
\end{equation}
Notice how rapidly the variances in \cocos with an $O(1)$-uniform mapping function (Eq.~\eqref{var:approx:method:constant}) and \trifly (Eq.~\eqref{var:approx:naive}) decrease depending on the number of workers (i.e., $\workernum$).
In Eq.~\eqref{var:approx:method:constant}, only the second term is $O(1/\workernum)$ while the other terms
are $O(1/\workernum^2)$. In Eq.~\eqref{var:approx:naive}, however, all the terms are $O(1/\workernum)$.
This analysis gives an intuition why the variance of $\cbart$ in \cocos can be smaller than that in \trifly, especially when many workers are used.
See Section~\ref{sec:trifly:experiments:theorem} for empirical comparison of the variances.

\smallsection{Limitations of Our Analysis:}
	The comparison above is based on the assumption that \cocos is equipped with an $O(1)$-uniform mapping function.
	While the uniform random mapping function is ``expected'' to be $1$-uniform, as formalized in Lemma~\ref{lemma:trifly:expectation}, we can easily find some cases (e.g., star graphs with the center node $u$, where $\max_{i\in\{1,...,k\}}\lti=\ltfu=t$) where there exists no $O(1)$-uniform node mapping function.
	We leave further analysis of the existence and identification (especially under the conditions in Section~\ref{sec:trifly:prelim:problem}) of optimal node mapping functions as future work.
	\begin{lem} \label{lemma:trifly:expectation}
		Assume $f:\SV\rightarrow \{1,...,\workernum\}$ is a random function where $\BP[f(u)=i]=1/\workernum$ for each node $u\in \SV$ and each $i$-th worker.
		Let  $\pt$ and $\qt$ be the counts of Type~1 and Type~2 triangle pairs in $\SGT$.
		Then, the following equations hold for $\epsilon=1$ at any time $t\in\ints$:
		\begin{align}
		\BE_{f}[\globalnumi] \leq (1+\epsilon)\cdot \frac{\globalnum}{\workernum},  \ \ \BE_{f}[\lti] \leq (1+\epsilon)\cdot\frac{t}{\workernum}, \label{eq:trifly:expectation:first} \\
		\ \ \BE_{f}[\pti] \leq (1+\epsilon)\cdot\frac{\pt}{\workernum},  \ \
		\BE_{f}[\qti] \leq (1+\epsilon)\cdot\frac{\qt}{\workernum^2}. 	\label{eq:trifly:expectation:second}
		%	\BE_{f}[\globalnumi]= O\left(\frac{\globalnum}{\workernum}\right), \ \BE_{f}[\lti] = O\left(\frac{t}{\workernum}\right), \\
		%	\BE_{f}[\pti] = O\left(\frac{\pt}{\workernum}\right), \
		%	\BE_{f}[\qti] = O\left(\frac{\qt}{\workernum^2}\right).
		\end{align}	
	\end{lem}
	\begin{proof} 
		See Appendix~\ref{sec:trifly:appendix:proof}. 
	\end{proof}

\subsection{Complexity Analysis}
\label{sec:trifly:analysis:complexity}

We discuss the time and space complexities of \trifly, \cocossimple (\cocos with the simple modulo function as $f$) and \cocosopt (\cocos with Algorithm~\ref{alg:cocos:opt} as $f$).
We assume that sampled edges are stored in the adjacency list format in memory, as in our implementation used in the experiment section.

\begin{table}
	\centering
	\caption[Time and space complexities of \cocos]{\label{tab:complexity} \figsummary{Time and space complexities of processing first $t$ edges in the input stream.} $\sbar:=\min(t,\budget\workernum)\leq \lbar:=\min(t \workernum,\budget\workernum)$.
	}
	\begin{tabular}{l|c|c|c}
		\multicolumn{4}{l}{\bf Time Complexity} \\
		\toprule
		Methods & Master & Workers (Total) & Aggregator \\
		\midrule
		\cocos (both) & \multirow{2}{*}{$O(t\workernum)$*} &  $O(t \sbar)$ &  $O(\min(t \sbar,\globalnum))$* \\
		\trifly &  & $O(t \lbar)$ & $O(\min(t \lbar,\globalnum\cdot\workernum))$* \\
		%\cocosopt &  &  $O(t \cdot\min(t,k\workernum))$ & $O(\globalnum)$*  \\
		\bottomrule
		\multicolumn{4}{l}{\ } \\
		\multicolumn{4}{l}{\bf Space Complexity} \\
		\toprule
		Methods & Master & Workers (Total) & Aggregator \\
		\midrule
		\cocossimple & $O(\workernum)$ & $O(\sbar)$ & \multirow{3}{*}{$O(|\SVT|)$*} \\
		\cocosopt & $O(|\SVT|+\workernum))$ & $O(\sbar)$ &  \\
		\trifly & $O(\workernum)$ & $O(\lbar)$  &  \\
		\bottomrule
		\multicolumn{4}{l}{*can be distributed across multiple masters or aggregators  (see Section~\ref{sec:trifly:method:multiple})}
	\end{tabular}
\end{table}

\subsubsection{Time Complexity Analysis}
\label{sec:trifly:analysis:complexity:time}

The time complexities of the considered algorithms for processing $t$ edges in the input stream are summarized in Table~\ref{tab:complexity}.
The master commonly takes $O(t\cdot\workernum)$ since, in the worst case, every edge is broadcast.

The workers in \trifly take $O(t\cdot \min(t \workernum,\budget\workernum))$ in total, while the workers in \cocos take only $O(t\cdot \min(t,\budget\workernum))$ in total, as shown in Theorems~\ref{theorem:trifly:complexity:time:naive} and \ref{theorem:trifly:complexity:time:method}, which are based on Lemma~\ref{lemma:trifly:complexity:time}.
\begin{lem} \label{lemma:trifly:complexity:time}
	%Assume that the $i$-th worker receives an edge $\es$.
	Let $\lsi$ be the load $\li$ of the $i$-th worker when $\es$ arrives.
	If the $i$-th worker receives $\es$, then it takes $O(\min(\lsi,\budget))$ to process $\es$ (i.e., to run lines~\ref{alg:trifly:worker:count:call}-\ref{alg:trifly:worker:sample:call} of Algorithm~\ref{alg:trifly} and lines~\ref{alg:cocos:worker:count:call}-\ref{alg:cocos:worker:sample:call} of Algorithm~\ref{alg:cocos}).
\end{lem}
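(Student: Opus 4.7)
The plan is to bound separately the time taken by the two procedures invoked on $\es$, namely \textsc{count}($\es$) and \textsc{sample}($\es$), and observe that both are at most $O(\min(\lsi,\budget))$ under the adjacency-list-with-hash-set representation assumed by the implementation. The key structural fact is that, right before $\es$ is processed, the number of edges currently stored in the $i$-th worker satisfies $|\SEI|\leq \min(\lsi,\budget)$: the worker has seen at most $\lsi$ edges for sampling, and the reservoir never holds more than $\budget$ of them. Consequently, for any node $w$ one has $|\SNI[w]|\leq |\SEI|\leq \min(\lsi,\budget)$, since each stored edge contributes at most one neighbor to any given adjacency list.

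For \textsc{count}($\uv$), the worker computes $\SNI[u]\cap \SNI[v]$. With hash-set adjacency lists this intersection is done by iterating through the smaller side and probing the other in $O(1)$ per element, yielding time $O(\min(|\SNI[u]|,|\SNI[v]|))$, which by the bound above is $O(\min(\lsi,\budget))$. Emitting the updates to the aggregator costs $O(1)$ per discovered common neighbor plus $O(1)$ for the three summary messages, which is subsumed by the intersection cost. For \textsc{sample}($\uv$), incrementing $\li$, checking $|\SEI|<\budget$, drawing the Bernoulli coin, and either inserting $\uv$ or evicting a uniformly random edge and inserting $\uv$ each cost $O(1)$ with hash-set adjacency lists, so \textsc{sample} is $O(1)$. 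Adding the two contributions gives the claimed $O(\min(\lsi,\budget))$ bound.

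The only subtlety, and the step I would argue most carefully, is the eviction in \textsc{sample} when the reservoir is full. Picking a uniformly random edge from $\SEI$ in $O(1)$ requires storing the sampled edges in a structure that supports $O(1)$ uniform sampling as well as $O(1)$ removal; this is achieved by keeping an auxiliary array of the edges alongside the per-node hash-based adjacency lists, so that a random index into the array gives a uniformly random edge, and removal is done by swapping with the last entry. Once this data-structure choice is made explicit, updating the two adjacency lists of the evicted edge's endpoints is $O(1)$ (hash-set deletes), and inserting the new edge into both endpoints' adjacency lists and into the array is also $O(1)$. All other bookkeeping (the counter $\li$, and the check $|\SEI|<\budget$) is trivially $O(1)$, so the sample step does not inflate the bound established by the intersection step.
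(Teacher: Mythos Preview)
Your proof is correct and follows essentially the same approach as the paper: identify the neighbor intersection in \textsc{count} as the dominant step and bound it by $|\SEI|\leq\min(\lsi,\budget)$. The paper's proof is a two-sentence sketch that simply states the intersection costs $|\SNI[u]|+|\SNI[v]|=O(|\SESI|)=O(\min(\lsi,\budget))$, whereas you additionally justify the $O(1)$ cost of \textsc{sample} and spell out the data-structure details for uniform eviction; these are refinements the paper leaves implicit.
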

\begin{proof}
The most expensive step of processing $\es=\pair$ in both Algorithms~\ref{alg:trifly} and \ref{alg:cocos} is to find the common neighbors of nodes $u$ and $v$ (line~\ref{alg:trifly:worker:intersect} of Algorithm~\ref{alg:trifly}).
Computing $\SNI[u]\cap\SNI[v]$ requires accessing $|\SNI[u]|+|\SNI[v]|=O(|\SESI|)=O(\min(\lsi,\budget))$ edges, where $\SESI$ is the set of edges stored in the $i$-th worker when $\es$ arrives. 
\end{proof}
\begin{thm}[Time Complexity of Workers in \trifly]
	\label{theorem:trifly:complexity:time:naive}
	In Algorithm~\ref{alg:trifly}, the total time complexity of the workers for processing the first $t$ edges in the input stream is  $O(t\cdot \min(t\workernum,\budget\workernum))$.
\end{thm}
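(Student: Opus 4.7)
The plan is to observe that in Algorithm~\ref{alg:trifly} the master broadcasts every edge to every worker (line~\ref{alg:trifly:master:end}), so each of the $\workernum$ workers receives and processes all $t$ edges $e^{(1)},\ldots,e^{(t)}$. Consequently the load $\lsi$ of the $i$-th worker when $\es$ arrives equals $s-1$ for every $i\in\workerset$, and by Lemma~\ref{lemma:trifly:complexity:time} the cost of processing $\es$ at any worker is $O(\min(s,\budget))$. Summing over all workers and all edges, the total running time of the workers is
\begin{equation*}
\sum_{i=1}^{\workernum}\sum_{s=1}^{t} O(\min(s,\budget)) \;=\; \workernum \sum_{s=1}^{t} O(\min(s,\budget)).
\end{equation*}

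Next I would bound the inner sum by splitting on whether $t\leq\budget$ or $t>\budget$. In the first case $\sum_{s=1}^{t}\min(s,\budget)=\sum_{s=1}^{t}s=\Theta(t^{2})$, and in the second case $\sum_{s=1}^{t}\min(s,\budget)=\sum_{s=1}^{\budget}s+\sum_{s=\budget+1}^{t}\budget=\Theta(\budget^{2})+\Theta((t-\budget)\budget)=\Theta(t\budget)$. Combining the two cases uniformly gives $\sum_{s=1}^{t}\min(s,\budget)=O(t\cdot\min(t,\budget))$, so the total worker cost is $O(t\workernum\cdot\min(t,\budget))=O(t\cdot\min(t\workernum,\budget\workernum))$, as claimed.

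There is essentially no obstacle here: the key observation is the trivial one that broadcasting forces $\lsi=s-1$ for every worker, which is what distinguishes the bound for \trifly from the tighter bound one gets for \cocos (where each edge is delivered to only a subset of the workers and the loads are smaller). The only minor care needed is in making the case split on $t$ versus $\budget$ explicit so that the $\min$ appears correctly in the final expression.
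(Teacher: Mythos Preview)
Your proposal is correct and follows essentially the same approach as the paper: observe that broadcasting forces $\lsi=s-1$ for every worker, apply Lemma~\ref{lemma:trifly:complexity:time}, and sum over all $\workernum$ workers and all $t$ edges. The only cosmetic difference is that the paper bounds $\sum_{s=1}^{t}\min(s-1,\budget)$ directly by $t\cdot\min(t,\budget)$ (each summand is at most the last), whereas you make the case split on $t$ versus $\budget$ explicit; both yield the same $O(t\cdot\min(t\workernum,\budget\workernum))$.
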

\begin{proof}
	From Lemma~\ref{lemma:trifly:complexity:time},
	processing an edge $\es$ by the workers takes $O(\sum_{i=1}^{\workernum}\min(\lsi,\budget))$ in total.
	Thus, processing the first $t$ edges takes $O\left(\sum_{s=1}^{t} \sum_{i=1}^{\workernum}\min(\lsi,\budget)\right)$. 
	Since $\lsi=s-1$ in Algorithm~\ref{alg:trifly},
	\begin{align*}
	\sum\nolimits_{s=1}^{t} \sum\nolimits_{i=1}^{\workernum}\min(\lsi,\budget) & = \sum\nolimits_{s=1}^{t}\sum\nolimits_{i=1}^{\workernum} \min(s-1,\budget) = \sum\nolimits_{s=1}^{t} \min((s-1)\workernum,\budget\workernum) \\ & \leq t\cdot \min(t\workernum,\budget\workernum).
	\end{align*}
	Hence, the workers take $O(t \cdot \min(t\workernum,\budget\workernum))$ in total to process the first $t$ edges in the input stream.
\end{proof}
\begin{thm}[Time Complexity of Workers in \cocos] \label{theorem:trifly:complexity:time:method}
	In Algorithm~\ref{alg:cocos}, the total time complexity of the workers for processing the first $t$ edges in the input stream is  $O(t\cdot \min(t,\budget\workernum))$.
\end{thm}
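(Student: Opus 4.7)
The plan is to imitate the proof of Theorem~\ref{theorem:trifly:complexity:time:naive} via Lemma~\ref{lemma:trifly:complexity:time}, while exploiting the central structural difference of Algorithm~\ref{alg:cocos}: the per-worker load $\lsi$ grows \emph{only} when case \assign occurs at worker $i$, which happens for each edge at at most two workers (namely those corresponding to $f(u)$ and $f(v)$). This yields the aggregate load bound $\sum_{i=1}^{\workernum}\lsi \leq 2(s-1)$, whereas in \trifly one had $\lsi = s-1$ for every $i$. Replacing the latter by the former is precisely what trims a factor of $\workernum$ from the bound.

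By Lemma~\ref{lemma:trifly:complexity:time}, each worker that receives $\es$ spends $O(\min(\lsi,\budget))$ time on lines~\ref{alg:cocos:worker:count:call}--\ref{alg:cocos:worker:sample:call} of Algorithm~\ref{alg:cocos}, while workers that do not receive $\es$ do no work. I would upper bound the cost triggered by each edge $\es$ by summing over all $\workernum$ workers. Applying the termwise bounds $\min(\lsi,\budget)\leq \lsi$ and $\min(\lsi,\budget)\leq \budget$ gives
\begin{equation*}
\sum_{i=1}^{\workernum}\min(\lsi,\budget) \;\leq\; \min\!\left(\sum_{i=1}^{\workernum}\lsi,\;\budget\workernum\right) \;\leq\; \min(2(s-1),\,\budget\workernum).
\end{equation*}
Summing over $s=1,\ldots,t$ then gives
\begin{equation*}
\sum_{s=1}^{t}\min(2(s-1),\,\budget\workernum) \;\leq\; t\cdot\min(2(t-1),\,\budget\workernum) \;=\; O\!\left(t\cdot\min(t,\budget\workernum)\right),
\end{equation*}
which is the desired bound.

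The only genuine step is the aggregate load inequality $\sum_{i=1}^{\workernum}\lsi \leq 2(s-1)$, which I would establish by a simple charging argument over the first $s-1$ edges: each edge contributes a single charge (to worker $f(u){=}f(v)$) in case \lucky and two charges (one to $f(u)$ and one to $f(v)$) in case \unlucky, matching the master's dispatch and sampling logic described in Section~\ref{sec:trifly:method:algorithm:detail}. I do not foresee any real difficulty; the proof parallels Theorem~\ref{theorem:trifly:complexity:time:naive} verbatim once this sharper load bound, enabled by the node-to-worker assignment $f$ and the conditional invocation of \textsc{sample} in line~\ref{alg:cocos:worker:sample:call}, is in hand.
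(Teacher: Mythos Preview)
Your proposal is correct and follows essentially the same approach as the paper: both invoke Lemma~\ref{lemma:trifly:complexity:time}, bound the per-edge cost by $\sum_{i=1}^{\workernum}\min(\lsi,\budget)\leq\min\big(\sum_i\lsi,\budget\workernum\big)$, and use the aggregate load inequality $\sum_{i=1}^{\workernum}\lsi\leq 2(s-1)$ (which the paper justifies by citing P1 of Lemma~\ref{lemma:trifly:property}, while you give the equivalent direct charging argument) before summing over $s$.
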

\begin{proof}
From Lemma~\ref{lemma:trifly:complexity:time},
processing an edge $\es$ by the workers takes $O(\sum_{i=1}^{\workernum}\min(\lsi,\budget))$ in total.
Thus, processing the first $t$ edges takes $O\left(\sum_{s=1}^{t} \sum_{i=1}^{\workernum}\min(\lsi,\budget)\right)$. 
Since each edge is assigned to at most two workers (i.e.,
P1 in Lemma~\ref{lemma:trifly:property}), $\sum_{i=1}^{\workernum}\lsi \leq 2(s-1)$ holds, and it implies
\begin{align*}
	\sum\nolimits_{s=1}^{t} \sum\nolimits_{i=1}^{\workernum}\min(\lsi,\budget) & \leq \sum\nolimits_{s=1}^{t} \min(\sum\nolimits_{i=1}^{\workernum}\lsi,\sum\nolimits_{i=1}^{\workernum}\budget))\\ & \leq \sum\nolimits_{s=1}^{t} \min(2(s-1),\budget\workernum)) \leq t\cdot \min(2t,\budget\workernum).
\end{align*}
Hence, the workers take $O(t \cdot \min(t,\budget\workernum))$ in total to process the first $t$ edges in the input stream.
\end{proof}

The aggregator takes $O(\globalnum\cdot\workernum)$ in \trifly since, in the worst case, each triangle is counted by every worker and thus the increases in counts by each triangle are sent to the aggregator $\workernum$ times.
In \cocossimple and \cocosopt, however, the aggregator takes $O(\min(|\STT|, t\cdot \min(t,\budget\workernum)))$.
Since the aggregator takes $O(1)$ for each update that it receives, its time complexity is proportional to the number of triangles counted by the workers.
%In \naive, the number of counted triangles is $O(t\cdot \min(t \workernum,\budget\workernum))$ by Theorem~\ref{theorem:trifly:complexity:time:naive}, and it is $O(|\STT|\cdot\workernum)$ since each triangle is counted by at most $\workernum$ workers.
The number of counted triangles is $O(t\cdot \min(t,\budget\workernum))$ by Theorem~\ref{theorem:trifly:complexity:time:method}, and it is $O(|\STT|)$ since each triangle is counted by at most one worker (i.e., P2 in Lemma~\ref{lemma:trifly:property}).
However, the computational cost of the aggregator can be easily distributed across multiple aggregators, as discussed in Section~\ref{sec:trifly:method:multiple}.

Notice that, with a fixed storage budget $\budget$, the time complexities of \cocossimple and \cocosopt are linear in the number of edges in the input stream, as also shown empirically in Section~\ref{sec:trifly:experiments:scalable}.

\subsubsection{Space Complexity Analysis}
\label{sec:trifly:analysis:complexity:space}

The space complexities of the considered algorithms for processing $t$ edges in the input stream are summarized in Table~\ref{tab:complexity}.
In \trifly and \cocossimple, the master requires $O(\workernum)$ space to maintain the addresses of all the workers. 
In \cocosopt, the master requires additional $O(\workernum+|\SVT|)$ space to store the loads of the workers and the mapping between the nodes and the workers (i.e., function $f$) while processing the first $t$ edges in the input stream.

In all the algorithms,
the workers require $O(\sum_{i=1}^{\workernum} \min(\ltip,\budget))$ space in total, to store sampled edges, where $\lti$ is the load $\li$ of the $i$-th worker when $\et$ arrives.
In \trifly, since $\ltip=t$, the space complexity of the workers is $O(\min(t\workernum,\budget\workernum))$ in total.
In \cocossimple ad \cocosopt,
since each edge is stored in at most two workers (i.e., P1 in Lemma~\ref{lemma:trifly:property}), $\sum_{i=1}^{\workernum}\ltip \leq 2t$ holds, and it implies  
\begin{equation*} \sum\nolimits_{i=1}^{\workernum} \min(\ltip,\budget) \leq
	\min(\sum\nolimits_{i=1}^{\workernum}\ltip,\sum\nolimits_{i=1}^{\workernum}\budget) \leq \min(2t,\budget\workernum). %\\& = O(\min(t,\budget\workernum)).
\end{equation*}
Hence, the total space complexity of the workers is $O(\min(t,\budget\workernum))$. 

In all the algorithms, the aggregator maintains one estimate of the global triangle count and $O(|\SVT|)$ estimates of the local triangle counts. However, this requirement can be easily distributed across multiple aggregators, as discussed in Section~\ref{sec:trifly:method:multiple}.

\subsubsection{A Guide to Setting Parameters}
\label{sec:trifly:analysis:guide}

In this section, we provide a guide to setting the parameters of \cocos and \trifly.
As shown in Sections~\ref{sec:trifly:analysis:accuracy} and \ref{sec:trifly:analysis:complexity}, both the number of workers (i.e., $\workernum$) and the storage budget per worker (i.e., $\budget$) affect the accuracy and speed of \cocos and \trifly.
Which one should we increase first for rapid and accurate estimation?
For example, which one should we choose between $10$ workers with $10GB$ storage each and $100$ workers with $1GB$ storage each?

When using \cocos, $100$ workers with $1GB$ storage each is preferred. That is, we recommend increasing the number of workers (i.e., $\workernum$) first rather than the storage budget per worker (i.e., $\budget$).
As shown in Table~\ref{tab:complexity}, when $t$ is large enough, the elapsed time of \cocos increases linearly with both $\workernum$ and $\budget$. Specifically, if $t> bk$, the running time of masters is linear in $\workernum$ and independent of $\budget$, while that of each worker is linear in $\budget$ and independent of $\workernum$. The running time of aggregators increases linearly with both $\workernum$ and $\budget$.
However, as given in Eq.~\eqref{var:approx:method}, increasing $\workernum$ reduces the variance faster than increasing $\budget$ does. Specifically, the third term in Eq.~\eqref{var:approx:method} decreases quadratically with $\workernum$, while it decreases linearly with $\budget$.

When using \trifly, however, $10$ workers with $10GB$ storage each is preferred. That is, we recommend increasing the storage budget per worker (i.e., $\budget$) first rather than the number of workers (i.e., $\workernum$).
This is because increasing $\budget$ reduces the variance faster than increasing $\workernum$ does. Specifically, the first term in Eq.~\eqref{var:approx:naive} decreases quadratically with $\budget$, while it decreases linearly with $\workernum$.
When $t$ is large enough, the elapsed time of \trifly increases linearly with both $\workernum$ and $\budget$. Specifically, as summarized in Table~\ref{tab:complexity}, the running time of masters is linear in $\workernum$ and independent of $\budget$, while that of each worker is linear in $\budget$ and independent of $\workernum$. The running time of aggregators increase linearly with both $\workernum$ and $\budget$.

%\change{Note that, between \cocos and \trifly, we recommend using \cocos regardless of the parameter settings since the time complexity and the variance of estimates of \cocos are always smaller than those of \trifly.
%}

The only remaining parameter is the tolerance threshold $\theta$ in \cocosopt. Based on the empirical results in Section~\ref{sec:trifly:experiments:param}, we recommend setting it to $0.2$.

\section{Experiments}
\label{sec:trifly:exp}
\begin{table}[t]
	\centering
	\caption[Summary of the graph streams used in our experiments]{\label{tab:trifly:data:real} \figsummary{Summary of the graph streams used in our experiments.} B: billion, M: million, K: thousand.
	}
		\begin{tabular}{l|r|r|l}
			\toprule
			{\bf Name} & {\bf \# Nodes} & {\bf \# Edges} & {\bf Summary} \\
			\midrule
			%\textbf{HepTh} \cite{gehrke2003overview} & $27,769$ & $136,190$ & Citation \\
			%\textsf{\underline{\smash{Em}}ail} \cite{klimt2004introducing} & $86,978$ & $297,456$ & Email \\
			\arxivD \cite{gehrke2003overview} 
			& $34.5$K & $421$K & Citation network \\
			%\textsf{\underline{\smash{Ep}}inion} \cite{massa2005controversial} & $131,580$ & $711,210$ & Trust\\
			\facebookD \cite{viswanath2009evolution} 
			& $63.7$K & $817$K & Friendship network \\
			\googleD \cite{leskovec2009community} 
			& $875$K & $4.32$M & Web graph \\
			\berkstanD \cite{leskovec2009community} 
			& $685$K & $6.65$M & Web graph \\
			\youtubeD \cite{mislove2007measurement} 
			& $3.22$M & $9.38$M & Friendship network\\ 
			\flickrD \cite{mislove2007measurement}  
			& $2.30$M & $22.8$M & Friendship network\\
			\livejournalD \cite{mislove2007measurement}  
			& $4.00$M & $34.7$M & Friendship network\\
%			\orkutD \cite{mislove2007measurement}  
%			& $3.07$M & $117$M & Friendship network\\
			\friendsterD \cite{yang2015defining}  
			& $65.6$M & $1.81$B & Friendship network\\
			\midrule
			\randomD (800GB) & $1$M & $0.1$B-$100$B & Synthetic graph \\
			\bottomrule
		\end{tabular}
\end{table}

We review our experiments for answering the following questions:
\bit
	\item {\bf Q1. Illustration of Theorems}: Does \cocos give unbiased estimates? How do their variances scale with the number of workers? 
	\item {\bf Q2. Speed and Accuracy}: Is \cocos faster and more accurate than baselines?
	\item {\bf Q3. Scalability}: Does \cocos scale linearly with the number of edges in the input stream?
	\item {\bf Q4. Effects of Parameters}: How do the number of workers, storage budget, and parameter $\theta$ affect the accuracy of \cocos?
\eit

\subsection{Experimental Settings}
\label{sec:trifly:experiments:settings}

\smallsection{Machines:} All experiments were conducted on a cluster of 40 machines with 3.47GHz Intel Xeon X5690 CPUs and 32GB RAM.

\smallsection{Datasets:} We used the graphs listed in Table~\ref{tab:trifly:data:real}. 
%Descriptions of the graphs are given in  Section~B of the supplementary document \cite{supple}.
We ignored all self loops, parallel edges, and directions of edges.
We simulated graph streams by streaming the edges of the corresponding graph in a random order from the disk of the machine hosting the master. 

\smallsection{Implementations:} We implemented the following algorithms commonly in C++ and MPICH 3.1:
\bit
	\item {\bf \cocossimple} (Section~\ref{sec:trifly:method:algorithm}): proposed distributed streaming algorithms using the modulo function as the node mapping function $f$  (i.e., $f(x) = x \text{ mode } \workernum$). 
	\item {\bf \cocosopt} (Section~\ref{sec:trifly:method:opt}): proposed distributed streaming algorithms using Algorithm~\ref{alg:cocos:opt} as the node mapping function $f$.
	\item {\bf \trifly} (Section~\ref{sec:trifly:method:baseline}): baseline distributed streaming algorithm.
	\item {\bf \mascot} \cite{lim2018memory} and {\bf \triestimp} \cite{stefani2017triest}: state-of-the-art single-machine streaming algorithms.
\eit
Among potential competitors, we chose streaming algorithms that estimate both global and local triangle counts. The chosen algorithms, \mascot and \triestimp, are also more accurate than several well-known single-machine streaming algorithms that estimate only the global triangle count, as shown in Appendix~\ref{sec:trifly:appendix:single}.
For the distributed algorithms, we used one master and one aggregator hosted by the same machine.
Workers were hosted by different machines (unless their number was greater than that of machines). They used a part of the main memory of hosting machines as their local storage.
In every algorithm,
sampled edges were stored in the adjacency list format, and lazy aggregation, explained in Section~\ref{sec:trifly:method:lazy}, was used so that all estimates were aggregated once at the end of the input stream.
We fixed $\theta$ in \cocosopt to $0.2$, which gave the best accuracy (see Section~\ref{sec:trifly:experiments:param}).

\smallsection{Evaluation Metrics:}
We measured the accuracy of the considered algorithms at the end of each input stream.
Let $\SG=(\SV,\SE)$ be the graph at the end of the input stream.
Then, for each node $u\in\SV$, let $x[u]$ be the true local count of $u$ in $\SG$, and let $\hat{x}[u]$ be its estimate obtained by the evaluated algorithm.
Likewise, let $x$ and $\hat{x}$ be the true and estimated global triangle counts, respectively.\footnote{We computed the exact counts of global and local triangles in large-scale graphs, using \cocosopt with enough storage budget $b$.}
We evaluated 
the accuracy of global triangle counting using {\em global error}, defined as {\small$\frac{|x-\hat{x}|}{1+x}$}, and {\em global variance}, defined as {\small $(x-\hat{x})^2$}.\footnote{Note that all considered algorithms are unbiased. Note that, we can estimate the variance of $\hat{x}$ by computing these measure multiple times and then computing the mean of them.}
For the accuracy of local triangle counting,
we used {\em local error}, defined as {\small $\frac{1}{|\SV|}\sum\nolimits_{u\in\SV}\frac{|x[u]-\hat{x}[u]|}{1+x[u]}$}, and {\em local RMSE}, defined as {\small $\sqrt{\frac{1}{|\SV|}\sum\nolimits_{u\in\SV}(x[u]-\hat{x}[u])^{2}}$}. We also used Spearman's rank correlation coefficient \cite{spearman1904proof} between $\{(u,x[u])\}_{u\in\SV}$ and $\{(u,\hat{x}[u])\}_{u\in\SV}$.

\begin{figure}[t]
	\centering
	\includegraphics[width=0.72\linewidth]{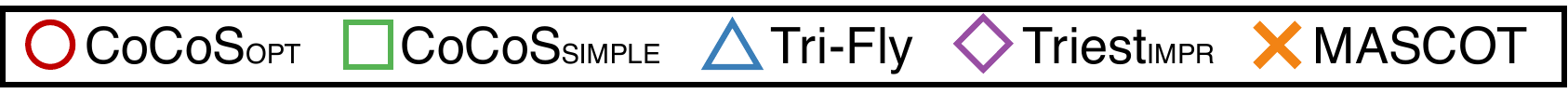}\\
	\vspace{-2.5mm}
	\subfigure[\arxivD Dataset]{
		\includegraphics[width= 0.24\linewidth]{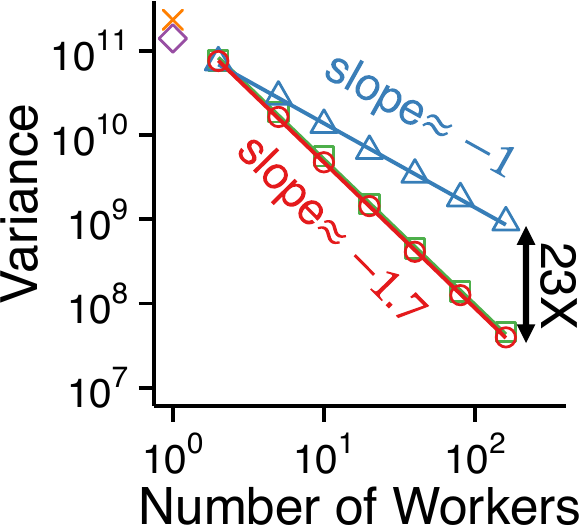}
	}
	\hspace{-1mm}
	\subfigure[\facebookD Dataset]{
		\includegraphics[width= 0.24\linewidth]{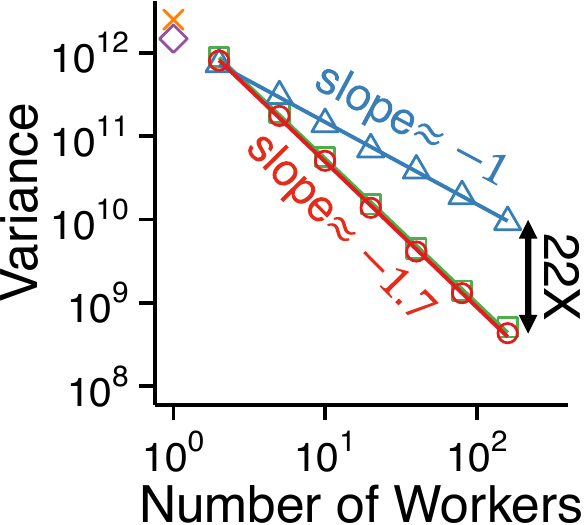}
	}
	\hspace{-1mm}
	\subfigure[\googleD Dataset]{
		\includegraphics[width= 0.24\linewidth]{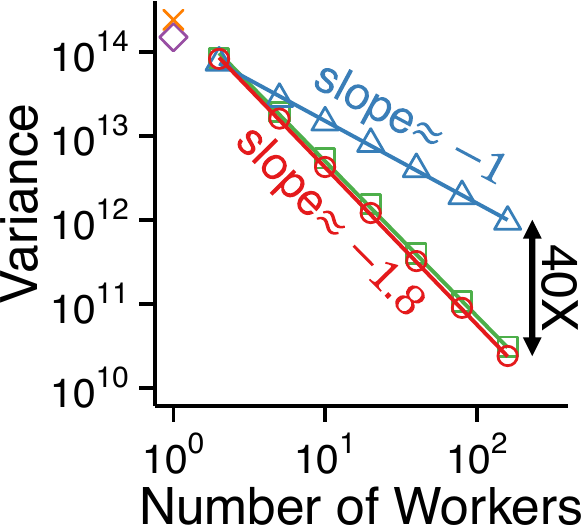}
	}
	\\ \vspace{-2mm}
	\caption[Variance Drop in \cocos]{\label{fig:trifly:theorem:variance}
		\figsummary{The variance of estimates drops faster in \cocosopt and \cocossimple than in \trifly}, as we use more workers.
	}
\end{figure}

\subsection{Q1. Illustration of Our Theorems}
\label{sec:trifly:experiments:theorem}

\smallsection{\cocos gave unbiased estimates with small variances.}
Figure~\ref{fig:trifly:crown:unbias} in Section~\ref{sec:trifly:intro} illustrates Theorems~\ref{thm:trifly:bias:naive} and \ref{thm:trifly:bias:method}, the unbiasedness of \trifly and \cocos.
We obtained $10,000$ estimates of the global triangle count in the \googleD dataset using each distributed algorithm.
We used $30$ workers, and set $\budget$ so that each worker stored up to $5\%$ of the edges.
As expected from Theorems~\ref{thm:trifly:bias:naive} and \ref{thm:trifly:bias:method},
\trifly, \cocosopt, and \cocossimple gave estimates whose averages were close to the true triangle count.
The variance was the smallest in \cocosopt, and the variance in \cocossimple was smaller than that in \trifly.
%Specifically, the true triangle count was 1,180,387, and the empirical 95\% confidence intervals (i.e., $\text{mean}\pm 1.96\sqrt{\text{estimated variance}}$) were $1,180,780\pm64,304$ (\triestimp), $1,180,419\pm11,668$ (\trifly), $1,180,355\pm2,607$ (\cocossimple), and $1,180,374\pm783$ (\cocosopt).

\smallsection{The variance in \cocos dropped fast with the number of workers.}
Figure~\ref{fig:trifly:theorem:variance} illustrates Theorems \ref{thm:trifly:var:naive} and \ref{thm:trifly:var:method}, the variances of the estimates of the global triangle count in \trifly and \cocos.
As we scaled up the number of workers, the variance decreased faster in \cocosopt and \cocossimple ($\approx\workernum^{-1.7}$) than in \trifly ($\approx\workernum^{-1}$), as expected in Eq.~\eqref{var:approx:method} and Eq.~\eqref{var:approx:naive} in Section~\ref{sec:trifly:analysis:accuracy:variance}.
In each setting, $\budget$ was set to $1,000$, and the variance was estimated from $1,000$ trials.

\begin{figure}[t]
	\centering
	%	\vspace{-2mm}
	\subfigure[\youtubeD Dataset]{
		\includegraphics[width= 0.24\linewidth]{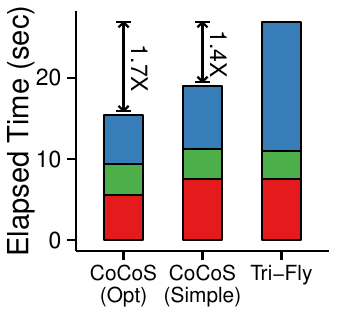}
	}
	\subfigure[\livejournalD Dataset]{
		\includegraphics[width= 0.24\linewidth]{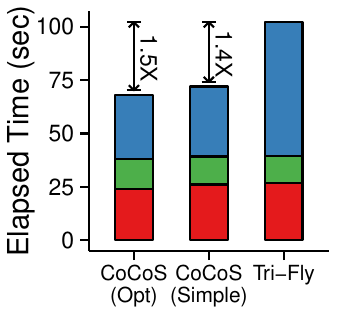}
	}
	\includegraphics[width= 0.22\linewidth]{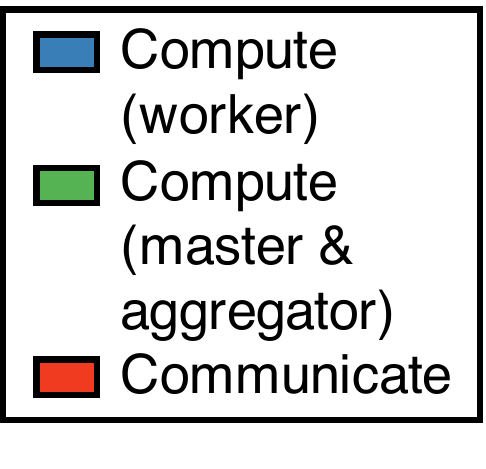}
	\\	\vspace{-2mm}
	\caption[Computation and communication overhead in \cocos]{\label{fig:trifly:performance:analysis} \figsummary{\cocosopt reduces both computation and communication overhead}, compared to \cocossimple and \trifly. \cocosopt is also more accurate than the others, as seen in Figure~\ref{fig:trifly:tradeoff}.}
\end{figure}

\begin{figure}
%	\vspace{-4mm}
	\centering
	\includegraphics[width=0.9\linewidth]{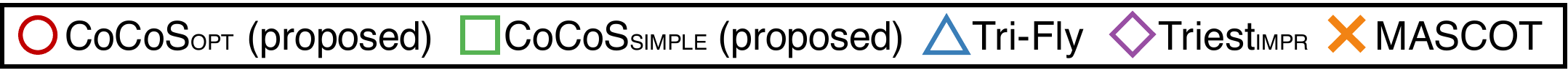}\\
	\vspace{0.5mm}
	\noindent\textbf{\globalerrorL}  (the lower the better): \hfill \ \ \ \\
	\hspace{-4mm}
	\vspace{0.5mm}
	\subfigure[\youtubeD]{
		\includegraphics[width= 0.185\linewidth]{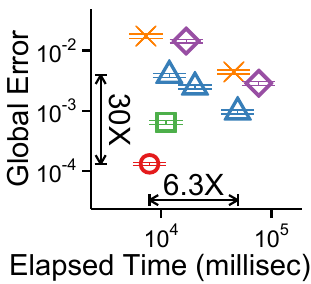}
	}
	\subfigure[\berkstanD]{
		\includegraphics[width= 0.185\linewidth]{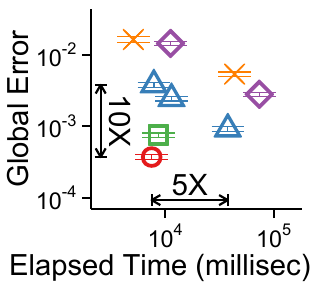}
	}
	\subfigure[\flickrD ]{
		\includegraphics[width= 0.185\linewidth]{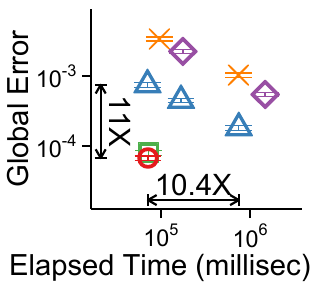}
	}
	\subfigure[\livejournalD]{
		\includegraphics[width= 0.185\linewidth]{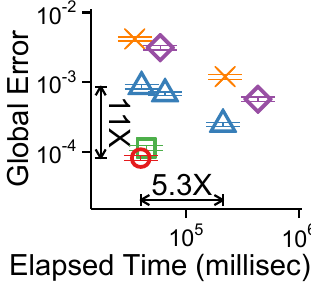}
	}
	%	\subfigure[\orkutD]{
	%		\includegraphics[width= 0.185\linewidth]{}
	%	}
	\subfigure[\friendsterD]{
		\includegraphics[width= 0.185\linewidth]{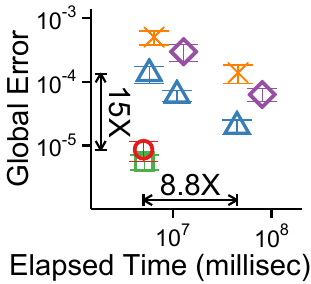}
	} \\
	\noindent\textbf{\globalvarL}  (the lower the better): \hfill \ \ \ \\
	\hspace{-4mm}
	\vspace{0.5mm}
	\subfigure[\youtubeD]{
		\includegraphics[width= 0.185\linewidth]{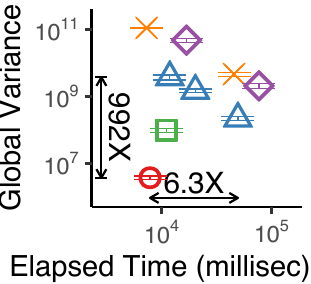}
	}
	\subfigure[\berkstanD]{
		\includegraphics[width= 0.185\linewidth]{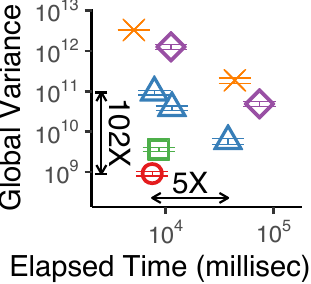}
	}
	\subfigure[\flickrD ]{
		\includegraphics[width= 0.185\linewidth]{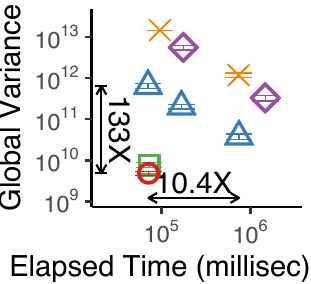}
	}
	\subfigure[\livejournalD]{
		\includegraphics[width= 0.185\linewidth]{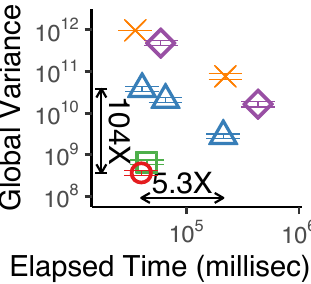}
	}
	%	\subfigure[\orkutD]{
	%		\includegraphics[width= 0.185\linewidth]{FIG/tradeoff_global_orkut}
	%	}
	\subfigure[\friendsterD]{
		\includegraphics[width= 0.185\linewidth]{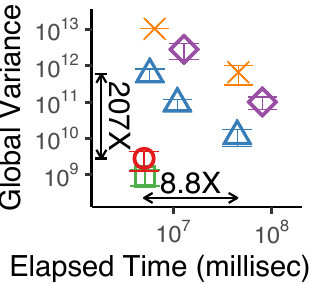}
	}
	\\
	\noindent\textbf{\localerrorL}  (the lower the better): \hfill \ \ \ \\
	\hspace{-4mm}
	\vspace{0.5mm}
	\subfigure[\youtubeD]{
		\includegraphics[width= 0.185\linewidth]{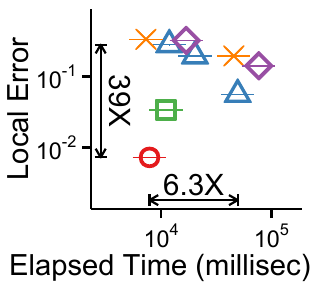}
	}
	\subfigure[\berkstanD]{
		\includegraphics[width= 0.185\linewidth]{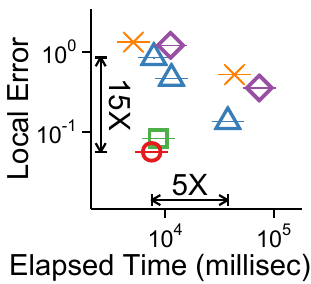}
	}
	\subfigure[\flickrD]{
		\includegraphics[width= 0.185\linewidth]{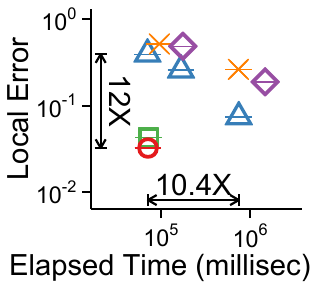}
	}
	\subfigure[\livejournalD]{
		\includegraphics[width= 0.185\linewidth]{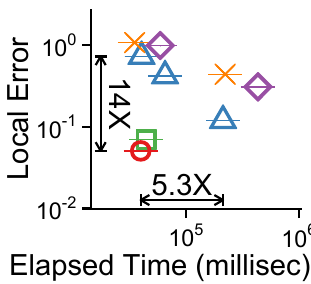}
	}
	%	\subfigure[\orkutD]{
	%		\includegraphics[width= 0.185\linewidth]{}
	%	}
	\subfigure[\friendsterD]{
		\includegraphics[width= 0.185\linewidth]{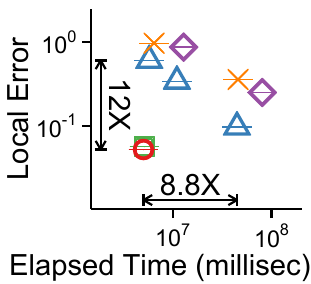}
	}
	\\
	\noindent\textbf{\rmse}  (the lower the better): \hfill \ \ \ \\
	\hspace{-4mm}
	\vspace{0.5mm}
	\subfigure[\youtubeD]{
		\includegraphics[width= 0.185\linewidth]{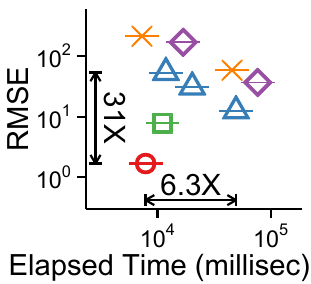}
	}
	\subfigure[\berkstanD]{
		\includegraphics[width= 0.185\linewidth]{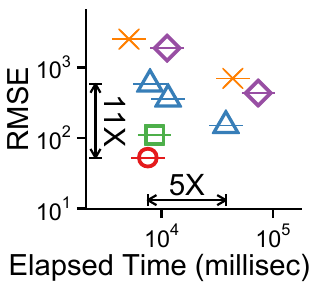}
	}
	\subfigure[\flickrD]{
		\includegraphics[width= 0.185\linewidth]{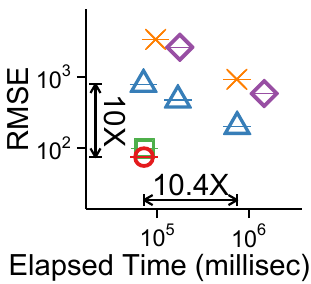}
	}
	\subfigure[\livejournalD]{
		\includegraphics[width= 0.185\linewidth]{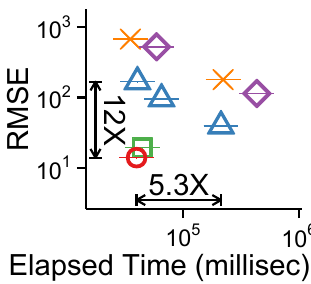}
	}
	%	\subfigure[\orkutD]{
	%		\includegraphics[width= 0.185\linewidth]{}
	%	}
	\subfigure[\friendsterD]{
		\includegraphics[width= 0.185\linewidth]{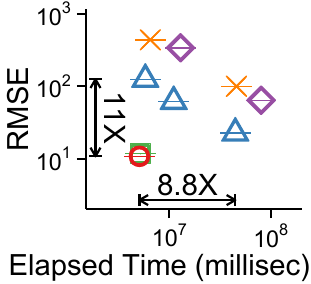}
	}
	\\
	\noindent\textbf{\rankcorrelationL}  (the higher the better): \hfill \ \ \ \\
	\hspace{-4mm}
	\vspace{0.5mm}
	\subfigure[\youtubeD]{
		\includegraphics[width= 0.185\linewidth]{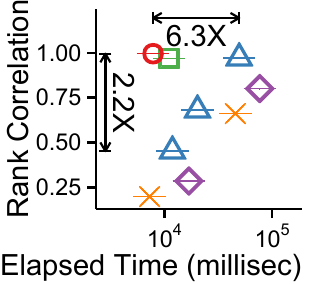}
	}
	\subfigure[\berkstanD]{
		\includegraphics[width= 0.185\linewidth]{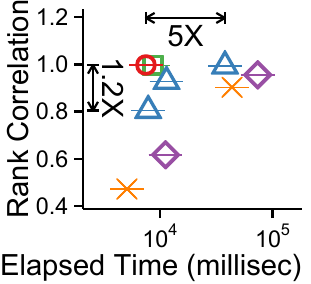}
	}
	\subfigure[\flickrD]{
		\includegraphics[width= 0.185\linewidth]{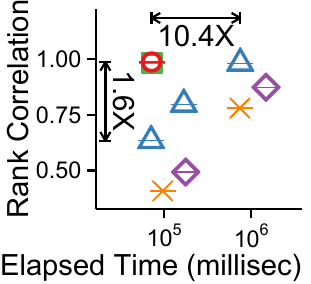}
	}
	\subfigure[\livejournalD]{
		\includegraphics[width= 0.185\linewidth]{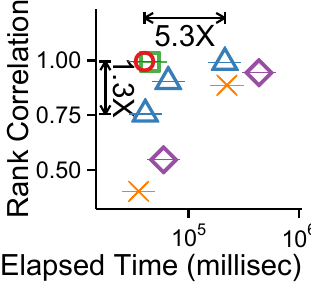}
	}
	%	\subfigure[\orkutD]{
	%		\includegraphics[width= 0.185\linewidth]{}
	%	}
	\subfigure[\friendsterD]{
		\includegraphics[width= 0.185\linewidth]{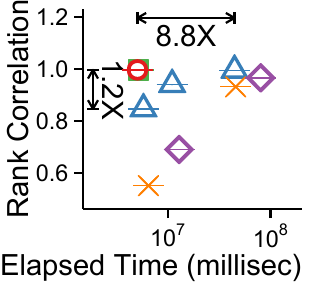}
	} \\
	\vspace{-2mm}
	\caption[Speed and accuracy of \cocos]{\label{fig:trifly:tradeoff} \figsummary{\cocos is fast and accurate.}
		\cocosopt (with $\theta$ fixed to $0.2$) yields estimates with up to $39\times$ smaller errors and $992\times$ smaller variances than those of the baselines with similar speeds, and it is up to $10.4\times$ faster than the baselines while offering higher accuracy.
		Error bars show sample standard errors.
	}
\end{figure}

\subsection{Q2. Speed and Accuracy}
\label{sec:trifly:experiments:tradeoff}

We measured the speed and accuracy of the considered algorithms with different storage budgets.\footnote{$\budget=5\%$ of the number of edges in each dataset in \cocossimple and \cocosopt. $\budget=\{2\%,$$5\%,20\%\}$ in \trifly. $\budget=\{5\%,40\%\}$ in \triestimp and \mascot.
See Section~\ref{sec:trifly:experiments:param} for the effects of $\budget$ values on the accuracies of the algorithms.
}
We used 30 workers for each distributed streaming algorithm.
To compare their speeds independently of the speed of the input stream, we measured the time taken by each algorithm to process edges, ignoring the time taken to wait for the arrival of edges in the input stream.
In Figure~\ref{fig:trifly:tradeoff}, we report the evaluation metrics and elapsed times averaged over $10$ trials in the \friendsterD dataset and over $100$ trials in the other large datasets.

\smallsection{\cocos gave the best trade-off between speed and accuracy.}
Specifically, \cocos was up to $\mathit{10.4\times}$ {\it faster} than the baselines while giving more accurate estimates.
Moreover, \cocos was up to $\mathit{30\times}$ and $\mathit{39\times}$ {\it more accurate} than the baselines with similar speeds in terms of global error and local error, respectively.
Moreover, \cocos yielded estimates of the global triangle count with up to $\mathit{992\times}$ smaller variances than those of the baselines with similar speeds.
Between the proposed algorithms, \cocosopt was up to $1.4\times$ faster and $4.9\times$ more accurate than \cocossimple. 

\smallsection{\cocosopt reduced computation and communication overhead.}
Figure~\ref{fig:trifly:performance:analysis} shows elapsed times for (a) computation in
the master and aggregator, (b) computation in the slowest worker, and (c) communication between machines in \cocosopt, \cocossimple, and \trifly.
The storage budget $\budget$ was set to $5\%$ of the number of edges in each dataset.
\cocosopt reduced computation and communication costs, compared to \cocossimple and \trifly, as we expect in Section~\ref{sec:trifly:method:opt}.
Recall that \cocosopt was also more accurate than \cocossimple and \trifly.

\subsection{Q3. Scalability}
\label{sec:trifly:experiments:scalable}

We measured how the running times of \cocosopt and \cocossimple scale with the number of edges in the input stream.
We used $30$ workers with $\budget$ fixed to $10^{7}$, and we measured their running times independently of the speed of the input stream, as in Section~\ref{sec:trifly:experiments:tradeoff}.

\smallsection{\cocos scaled linearly and handled terabyte-scale graphs.}
Figure~\ref{fig:trifly:scalability:random} shows the results in Erd\H{o}s-R\'enyi random graph streams with $1$ million nodes and different numbers of edges, 
and Figure~\ref{fig:trifly:scalability:real} shows the results in graph streams with realistic structures created by sampling different numbers of edges from the \friendsterD dataset.
Note that the largest stream has {\it $\mathit{100}$ billion edges}, which are {\it $\mathit{800}$GB.}
\cocosopt and \cocossimple scaled linearly with the size of the input stream, as we expect in Section~\ref{sec:trifly:analysis:complexity:time}.

%{\bf \cocos scaled linearly and handled terabyte-scale graphs.}
%Figure~\ref{fig:trifly:crown:scalability} in Section~\ref{sec:trifly:intro} shows that the running times of \cocosopt and \cocossimple scaled linearly with the number of edges in the input stream, as we expect in Section~\ref{sec:trifly:method:complexity}.
%We used $30$ workers with $\budget$ fixed to $10^{7}$, and to measure the scalability independently of the speed of the input stream, we measured the time taken to process edges while ignoring the time taken to wait for the arrival of edges in the input stream.
%We used Erd\H{o}s-R\'enyi random graph streams with $10^{6}$ nodes and up to {\bf $\mathbf{10^{11}}$ edges}, which are {\bf $\mathbf{800}$GB}. 
%We observed the same linear scalability on graph streams with realistic structure in Section~E of the supplementary document \cite{supple}.

%\vspace{-2mm}

\begin{figure}[t]
	\centering
	\subfigure[\randomD Dataset]{
		\label{fig:trifly:scalability:random} 
		\includegraphics[width= 0.24\linewidth]{FIG/crown_scalability.pdf}
	}
	\subfigure[\friendsterD Dataset]{
		\label{fig:trifly:scalability:real} 
		\includegraphics[width= 0.24\linewidth]{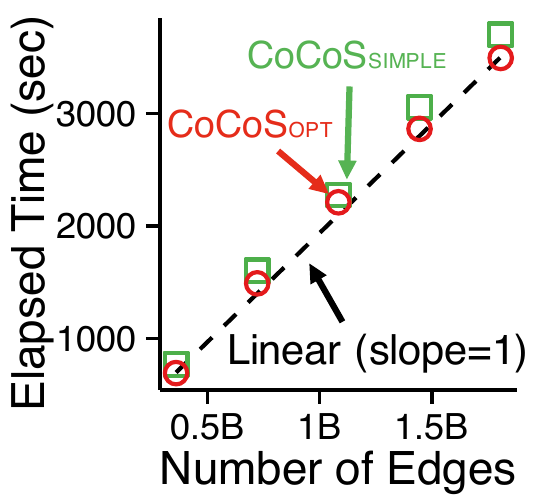}
	} \\
	\vspace{-2mm}
	\caption[Scalability of \cocos]{\label{fig:trifly:scalability} \figsummary{\cocosopt and \cocossimple scale to terabyate-scale streams linearly with the size of the input stream.}}
\end{figure}

\subsection{Q4. Effects of Parameters on Accuracy}
\label{sec:trifly:experiments:param}

We explored the effects of the parameters on the accuracies of the considered algorithms.
As a default setting, we used $30$ workers for the distributed streaming algorithms and set $\budget$ to $2\%$ of the number of edges for each dataset and $\theta$ to $0.2$. When the effect of a parameter was analyzed, the others were fixed to their default values. We reported results with global error as the evaluation metric but obtained consistent results with the other metrics.
%\footnote{We obtained consistent results with the other evaluation metrics.} See Appendix~\ref{sec:trifly:appendix:exp:param} for the results with local error.} 
We measured it $1,000$ times in each setting and reported the average.
In Figures~\ref{fig:trifly:worker}-\ref{fig:trifly:tolerance}, the error bars denote sample standard errors.

\begin{figure}[t]
	\centering
	\includegraphics[width=0.72\linewidth]{FIG/label_point_simple.pdf}\\
	\vspace{-2mm}
	\subfigure[\arxivD Dataset]{
		\includegraphics[width= 0.248\linewidth]{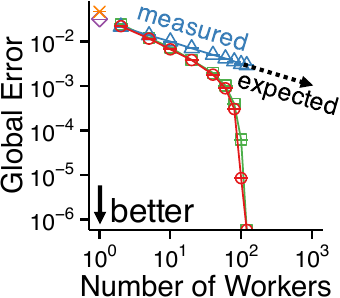}
	}
	\subfigure[\facebookD Dataset]{
		\includegraphics[width= 0.248\linewidth]{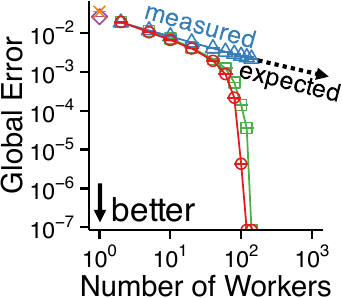}
	}
	\subfigure[\googleD Dataset]{
		\includegraphics[width= 0.24\linewidth]{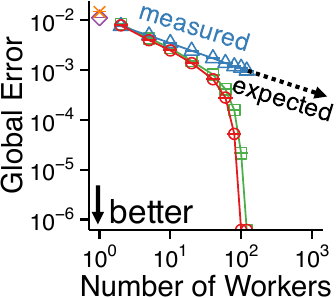}
	}	%\includegraphics[width=0.23\linewidth]{FIG/label_line_single_vert}
	\\
	\vspace{-2mm}
	\caption[Effects of the number of workers on the accuracy of \cocos]{\label{fig:trifly:worker}
		\figsummary{Estimation error decreases faster in \cocos than in \trifly}, as we use more workers. %Both x-axis and y-axis are log-scale. 
		%The error eventually becomes zero in \cocos, while it does not in \trifly. %$\theta$ is fixed to $0.2$.
	}
\end{figure}

\smallsection{As more workers were added, the estimation error decreased faster in \cocos} than in the baselines.
As seen in Figure~\ref{fig:trifly:worker}, the estimation errors of \cocosopt and \cocossimple became zero with about $100$ workers. However, that of \trifly dropped slowly with expectation that it never becomes zero with a finite number of workers (see Theorem~\ref{thm:trifly:var:naive}).

\begin{figure}[t]
	\centering
	\includegraphics[width=0.72\linewidth]{FIG/label_point_simple.pdf}\\
	\vspace{-2mm}
	\subfigure[\arxivD Dataset]{
		\includegraphics[width= 	0.24\linewidth]{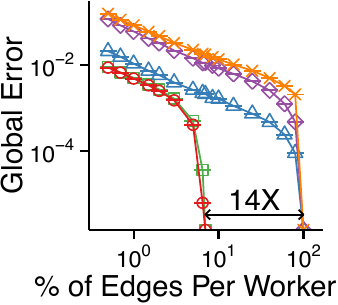}
	}
	\subfigure[\facebookD Dataset]{
		\includegraphics[width= 0.24\linewidth]{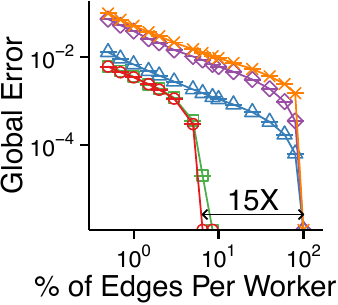}
	}
	\subfigure[\googleD Dataset]{
		\includegraphics[width= 0.24\linewidth]{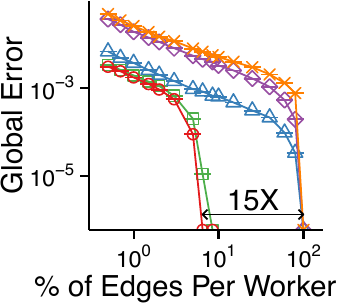}
	}
	\\
	\vspace{-2mm}
	\caption[Effects of the storage budget $b$ on the accuracy of \cocos]{\label{fig:trifly:sampling}
		\figsummary{Estimation error decreases faster in \cocos than in the baselines}, as we increase storage budget $\budget$. %Both x-axis and y-axis are log-scale. 
		For exact estimation,
		\cocos requires $14\times$ smaller $\budget$ than the others.
		%		The error becomes zero with $14\times$ smaller $\budget$ in \cocos than in the baselines. 
		%$\theta$ is fixed to $0.2$.
	}
\end{figure}

\smallsection{As storage budget increased, the estimation error decreased faster in \cocos} than in the baselines.
As seen in
Figure~\ref{fig:trifly:sampling}, the estimation errors of \cocosopt and \cocossimple became $0$ when each worker could store about $7\%$ of the edges in each dataset.
However, the estimation errors of the baselines became zero only when each worker could store all the edges in each dataset.

\begin{figure}[t]
	\centering
	\subfigure[\arxivD Dataset]{
		\includegraphics[width= 0.24\linewidth]{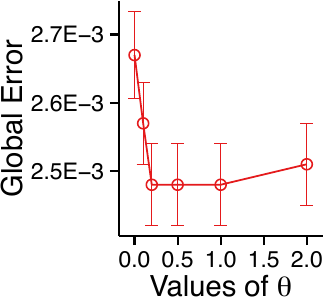}
	}
	\subfigure[\facebookD Dataset]{
		\includegraphics[width= 0.24\linewidth]{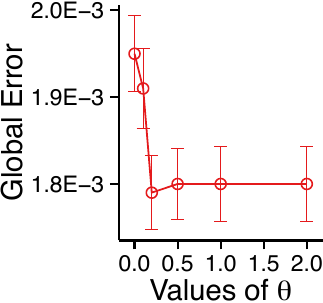}
	} 
	\subfigure[\googleD Dataset]{
		\includegraphics[width= 0.24\linewidth]{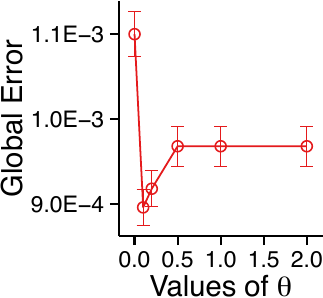}
	} \\
	\vspace{-2mm}
	\caption[Effects of the tolerance $\theta$ on the accuracy of \cocos]{\label{fig:trifly:tolerance}
		\figsummary{Estimation error in \cocosopt is smallest when $\theta$
			is around $0.2$,} while the error is not very sensitive to $\theta$.
	}
\end{figure}

\smallsection{\cocosopt was most accurate when $\theta$ was around $\mathbf{0.2}$}, as
seen in Figure~\ref{fig:trifly:tolerance}. The estimation error, however, was not very sensitive to the value of $\theta$ as long as $\theta$ was at least $0.2$.

\section{Conclusions}
\label{sec:trifly:summary}
In this work, we propose \cocos, a fast and accurate distributed streaming algorithm for the counts of global and local triangles.
By minimizing the redundant use of distributed computational and storage resources (P1-P3 in Lemma~\ref{lemma:trifly:property}),
%We compare \cocos to \naive, which runs a state-of-the-art streaming algorithm independently in each machine, and 
\cocos offers the following advantages:
\bit
	\item {\bf Accurate}: \cocos is up to {\it $\mathit{39\times}$ more accurate} than its similarly fast competitors (Figure~\ref{fig:trifly:tradeoff}). It gives exact estimates within {\it $\mathit{14\times}$ smaller storage budgets} than its competitors (Figure \ref{fig:trifly:sampling}).
	\item {\bf Fast}: \cocos is up to {\it $\mathit{10.4\times}$ faster} than its competitors while giving more accurate estimates (Figure~\ref{fig:trifly:tradeoff}). 
	\cocos scales linearly with the size of the input stream (Figure~\ref{fig:trifly:scalability}). %reducing both computation and communication costs (Figure~\ref{fig:trifly:performance:analysis}).
	\item {\bf Theoretically Sound}: \cocos gives unbiased estimates (Theorem~\ref{thm:trifly:bias:method}).
%	Their variances drop faster than its competitors' as the number of machines is scaled up (Theorem~\ref{thm:trifly:var:method} and Figure~\ref{fig:trifly:theorem:variance}).
\eit
{\bf Reproducibility:} The source code and datasets used in this chapter are available at \textit{\url{http://dmlab.kaist.ac.kr/cocos/}}.

\section*{Acknowledgments}
This research was supported by Disaster-Safety Platform Technology Development Program of the National Research Foundation of Korea (NRF) funded by the Ministry of Science and ICT (Grant Number: 2019M3D7A1094364)
and Institute of Information \& Communications Technology Planning \&
Evaluation (IITP) grant funded by the Korea government (MSIT) (No. 2019-0-00075, Artificial Intelligence Graduate School Program (KAIST)).
This research was also supported by the National Science Foundation under Grant No. CNS-1314632 and IIS-1408924.
Research was sponsored by the Army Research Laboratory 
and was accomplished under Cooperative Agreement Number W911NF-09-2-0053. 
This publication was made possible by NPRP grant \# 7-1330-2-483 from the Qatar National Research Fund (a member of Qatar Foundation).
Any opinions, findings, and conclusions or recommendations expressed in this
material are those of the author(s) and do not necessarily reflect the views
of the National Science Foundation, or other funding parties.
The U.S. Government is authorized to reproduce and 
distribute reprints for Government purposes notwithstanding 
any copyright notation here on.

\appendix
\section{Appendix: Proof of Lemma~3}
\label{sec:trifly:appendix:proof}

\begin{proof}
For each triangle $\triple\in \STT$ with $\tvw < \twu < \tuv \leq t$, 
let $\fuvw\in\workerset$ be the worker that can possibly count $\triple$.
That is, $\fuvw = f(w)$ if $f(u) \neq f(v)$, and $\fuvw = f(u) = f(v)$ otherwise. 

For the first claim, note that for each triangle $\triple$, each worker has the equal probability of being $\fuvw$. 
Therefore, 
\begin{equation}
\BE[\globalnumi] = \frac{\globalnum}{\workernum}. \label{eq:exp:triangle:proof} %%=O(\frac{\globalnum}{\workernum}).
\end{equation}

For the second claim, for each edge $\uv$, the probability that it is assigned to each $i$-th worker is equal to the probability that $f(u) = i$ or $f(v) = i$, which is 
$1-\left(1-\frac{1}{\workernum}\right)^{2}=\frac{2\workernum - 1}{\workernum^2}$.
Therefore,
\begin{equation}
\BE[\lti] =  \frac{(2\workernum-1)t}{\workernum^2} \label{eq:exp:load:proof}
\end{equation}

\begin{figure}[h]
	\centering
	\vspace{-3mm}
	\subfigure[Type~1 Pair]{
		\includegraphics[width=0.47\linewidth]{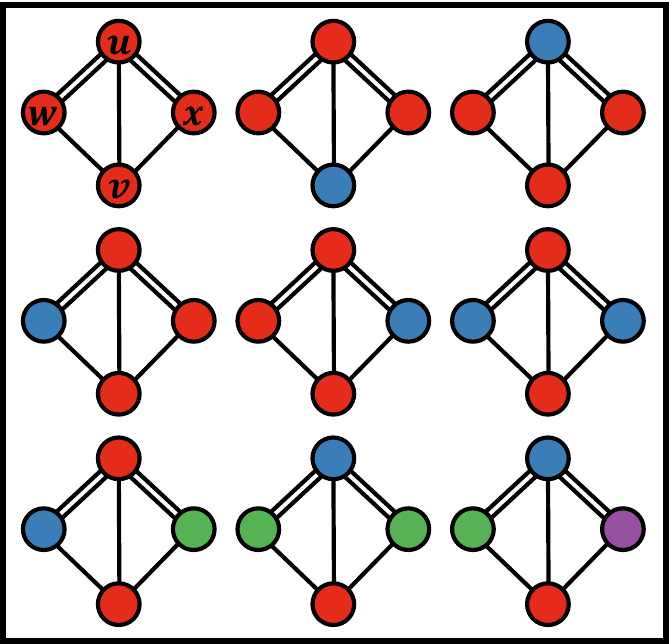}
		\label{fig:trifly:coloring:type1} 
	}
	\subfigure[Type~2 Pair]{
		\includegraphics[width=0.47\linewidth]{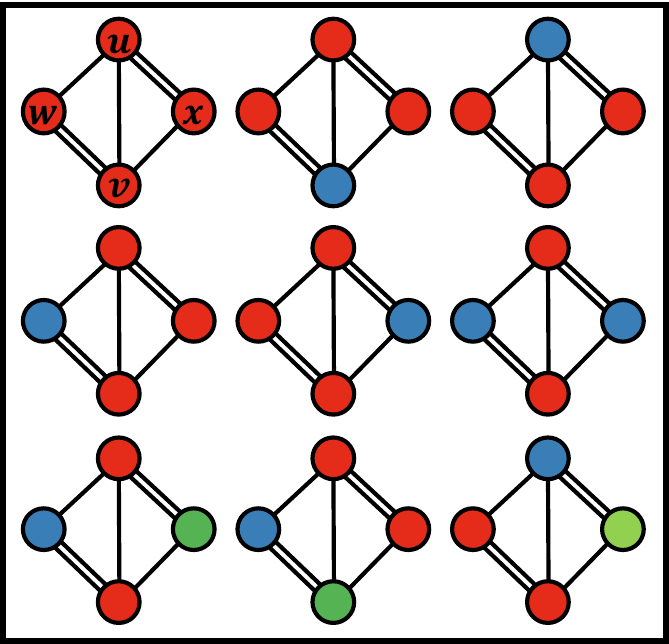}
		\label{fig:trifly:coloring:type2} 
	} \\
	\vspace{-2mm}
	\caption[Coloring of Type~1 and Type~2 triangle pairs where $f(uvw)=f(uvx)$]{\figsummary{Coloring of (a) Type~1 and (b) Type~2 triangle pairs where $\mathbf{f(uvw)=f(uvx)}$.}
		Nodes assigned to worker $f(uvw)$ ($= f(uvx)$) by $f$ are colored red.
		Nodes with different colors are assigned to different workers by $f$.
	}
\end{figure}

For the third claim, consider a Type 1 triangle pair $\triple$ and $\tripletwo$. 
By considering $f : \SV \to \workerset$ as a coloring of nodes $\SV$ with $\workernum$ colors, 
Figure~\ref{fig:trifly:coloring:type1} represents all the nine ways where $f(uvw) = f(uvx)$.
Note that $f(uvw) = f(uvx)$ is colored red in all of them. 
Fix a worker $i \in \workerset$. Then,  
\begin{align*}
	P[f(uvw) = f(uvx)=i] = &  \frac{1}{\workernum^4} + \frac{6}{\workernum^3} (1 - \frac{1}{\workernum})  + \frac{2}{\workernum^2} (1  - \frac{1}{\workernum})(1  - \frac{2}{\workernum}) \\
	&  + \frac{1}{\workernum}(1 - \frac{1}{\workernum}) (1 - \frac{2}{\workernum}) (1 - \frac{3}{\workernum}),
\end{align*}
where each term from left to right in the right hand side corresponds to the 1st case, 2nd-6th cases, 7th-8th cases, and 9th case, respectively, in Figure~\ref{fig:trifly:coloring:type1}. Therefore,
\begin{equation}
	\BE[\pti] = P[f(uvw) = f(uvx)=i]\pt = \frac{ \workernum^3 - 4 \workernum^2 + 10 \workernum - 6}{ \workernum^4 } \pt. \label{eq:exp:p:proof} %= O\left(\frac{\pt}{\workernum}\right).
\end{equation}

For the fourth claim, consider a Type 2 triangle pair $\triple$ and $\tripletwo$. 
By considering $f : \SV \to \workerset$ as a coloring of nodes $\SV$ with $\workernum$ colors, 
Figure~\ref{fig:trifly:coloring:type2} represents all the nine ways where $f(uvw) = f(uvx)$.
Note that $f(uvw) = f(uvx)$ is colored red in all of them. 
Fix a worker $i \in \workerset$. Then,  
\begin{equation*}
	P[f(uvw) = f(uvx)=i] = \frac{1}{\workernum^4} + \frac{1}{\workernum^3}(1 - \frac{1}{\workernum}) + \frac{1}{\workernum^2}(1  - \frac{1}{\workernum})(1  - \frac{2}{\workernum}),
\end{equation*}
where each term from left to right in the right hand side corresponds to the 1st case, 2nd-6th cases, and 7th-9th cases, respectively, in Figure~\ref{fig:trifly:coloring:type2}.
Therefore, 
\begin{equation}
	\BE[\qti] = P[f(uvw) = f(uvx)=i]\qt =\frac{3 \workernum^2 - 4 \workernum + 2}{ \workernum^4 } \qt. \label{eq:exp:q:proof} % = O\left(\frac{\qt}{\workernum^{2}}\right).
\end{equation}
Eq.~\eqref{eq:exp:triangle:proof}, Eq.~\eqref{eq:exp:load:proof}, Eq.~\eqref{eq:exp:p:proof}, Eq.~\eqref{eq:exp:q:proof}, $k\geq  1$, and $\epsilon=1$ imply Eq.~\eqref{eq:trifly:expectation:first} and Eq.~\eqref{eq:trifly:expectation:second}.
\end{proof}

\section{Appendix: A Comparison of Single-machine Streaming Algorithms}
\label{sec:trifly:appendix:single}

%As mentioned in Section~2.1 of the main paper,
%although \method adapts \triestimp for triangle counting within each worker, other (possibly parallel) single-machine streaming algorithms can be used instead of \triestimp.
%We explored a variant of \opt that uses (parallel) \ns  \cite{pavan2013counting,pavan2013parallel,tangwongsan2013parallel}  instead of \triestimp.
%As seen in Figure~\ref{fig:neighbor}, this variant was up to $285\times$ and $15\times$ less accurate than \opt and \naive, respectively, when they were run on $30$ workers, each of which stored up to  $5\%$ of the edges in each dataset.
%This result is consistent with the fact that \triestimp was most accurate among the single-machine streaming algorithms 
In Figure~\ref{fig:single}, we compare the accuracies of \triestimp \cite{stefani2017triest}, \mascot \cite{lim2018memory}, (parallel) Neighborhood Sampling (\ns) \cite{pavan2013counting,pavan2013parallel,tangwongsan2013parallel}, and Graph Sample and Hold (\gsh) \cite{ahmed2014graph}, while setting their storage budget so that up to $5\%$ of the edges in each dataset is stored.
\triestimp was most accurate among the single-machine streaming algorithms.
This result justifies our choice of adapting \triestimp for triangle counting in each worker.
Moreover, we lose good properties of \cocos and \trifly if they are equipped with the other algorithms rather than \triestimp. For example, combining them with \ns or \gsh does not support local triangle counting, and combining them with \mascot or \gsh requires prior knowledge about the input stream to set their parameters properly.

\begin{figure}[h]
	%	\vspace{-6mm}
	\centering
	\includegraphics[width=0.54\linewidth]{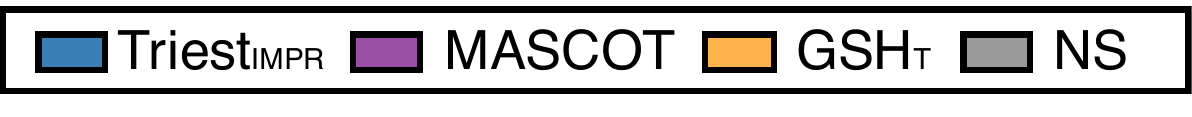} \\
	\includegraphics[width=0.77\linewidth]{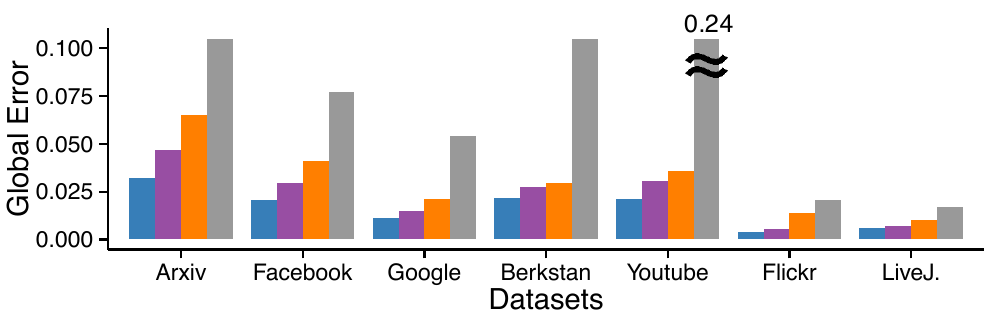} \\
	\vspace{-1mm}
	\caption{\label{fig:single} Among the considered single-machine streaming algorithms, \triestimp is most accurate.}
\end{figure}

%%
%% The next two lines define the bibliography style to be used, and
%% the bibliography file.
\bibliographystyle{ACM-Reference-Format}
\bibliography{BIB/kijung}

\end{document}